\theoremstyle{plain}
\newtheorem{theorem}{Theorem}
\newtheorem{prop}{Proposition}
\newtheorem{lemma}{Lemma}
\newtheorem*{theorem'}{Theorem\ 1'}
\theoremstyle{definition}
\newtheorem*{remark}{Remark}
\newcommand{\beq}{\begin{equation}}
\newcommand{\eeq}{\end{equation}}
\newcommand{\nn}{\nonumber}
\newcommand{\F}{\mathcal{F}}
\newcommand{\QQ}{{\mathbb Q}}
\newcommand{\CC}{{\mathbb C}}
\newcommand{\ZZ}{{\mathbb Z}}
\newcommand{\bT}{{\bf T}}
\newcommand{\e}{\epsilon}
\newcommand{\p}{\partial}
\newcommand{\diag}{{\rm diag}}
\newcommand{\TT}{{\mathcal{T}}}
\newcommand{\Tr}{{\rm Tr}}
\title{On enumeration of $l$-hypermaps}
\author{Zhengfei Huang, Di Yang}
\address{School of Mathematical Sciences, University of Science and Technology of China, Hefei 230026, P.R. China}
\email{huangzf063@mail.ustc.edu.cn, diyang@ustc.edu.cn}
\date{}
\begin{document}

\begin{abstract}
The $l$-hypermaps, $l\ge2$, which generalize (dual of) ribbon graphs ($l=2$ case), 
are interesting enumerative objects. 
In this paper, based on a theorem of Carlet--van de Leur--Posthuma--Shadrin and the matrix-resolvent method, we derive an explicit formula for $k$-point generating series of enumeration of 
$l$-hypermaps, which generalizes the one obtained in~\cite{DY1} for the $l=2$ case.
We also generalize a theorem of Dubrovin~\cite{Du2}.
\end{abstract}

\maketitle


\section{Introduction}\label{section1}
Ribbon graphs are graphs drawn on a Riemann surface. 
Their enumeration is a very important question in combinatorics, random matrix theory, 
and mathematical physics (cf.~e.g.~\cite{AMM, BIZ,EMP,HZ,DGZ,GMMMO,Hooft, Hooft2, KKN,Mehta, Tutte,Witten}). 
Closed formula with 1 vertex was obtained 
by Harer and Zagier~\cite{HZ} (cf.~also e.g.~\cite{DY1,WL}). 
Explicit formulas with 2 vertices were obtained in \cite{DY1, MS, Zhou}.
Explicit formulas with more than 2 vertices were obtained in \cite{DY1, Zhou}. 
We also refer to e.g.~\cite{AvM, Deift, DY1, GMMMO, STY, Y20, Zhou} for relations to integrable systems,  
to e.g.~\cite{ACKM, BD, DY1, DY2, Ercolani, EP, Zhougenus} for explicit formulas with a fixed genus, 
and 
to e.g.~\cite{CEO, Deift, DM, EMP, Eynard, EO, FIK, KKN, MulaseS} for analytic considerations including Painlev\'e transcendents and topological recursion.

By vertex-face duality for a ribbon graph, enumeration of ribbon graphs is equivalent to enumeration of polygon-angulations of a Riemann surface.
The latter is a special case of counting the so-called $l$-hypermaps.
Following Do and Manescu~\cite{DM}, an {\it $l$-hypermap of type $(g,k,d)$} is 
a triple $(\sigma_0,\sigma_1,\sigma_2)$, with $\sigma_0,\sigma_1,\sigma_2$ being elements in the symmetric group $S_d$, satisfying
\begin{itemize}
\item[(i)] $\sigma_0\sigma_1\sigma_2=$id;
\item[(ii)] $\sigma_2$ consists of $k$ disjoint cycles;
\item[(iii)] all cycles of~$\sigma_1$ have length $l$; 
\item[(iv)] $\sigma_0$ has $v$ cycles, where $v=2-2g-k+(l-1)\frac{d}{l}$.
\end{itemize}
An $l$-hypermap is called {\it connected} if the subgroup generated by $\sigma_0,\sigma_1,\sigma_2$ acts transitively on $\{1,\dots,d\}$. An $l$-hypermap is called {\it labelled} if the disjoint cycles of $\sigma_2$ are labelled from $1$ to $k$.
A connected $l$-hypermap $(\sigma_0,\sigma_1,\sigma_2)$ of type $(g,k,d)$ 
can be represented graphically~\cite{DM} drawn on a Riemann surface of genus $g$ 
 as follows:  each cycle of $\sigma_1$ 
corresponds to a blue $l$-gon (called a {\it hyperedge}),
each cycle of $\sigma_2$ corresponds to a white polygon of size being the length of the cycle,
and each cycle of $\sigma_0$ corresponds to a vertex whose valency of hyperedges equals the length of the cycle.
When $l=2$ all hyperedges are ribbons, so $2$-hypermaps are ribbon graphs.
The following are two examples of graphical representations of $3$-hypermaps:
Figure~1 shows a $3$-hypermap of type $(0,2,12)$, 
which has $8$ vertices, $2$ white faces and $4$ hyperedges, and
Figure~2 shows a $3$-hypermap of type $(1,2,12)$, 
which has $6$ vertices, $2$ white faces and $4$ hyperedges.
If one allows $\sigma_1$ to have an arbitrary cycle type, then one gets more general 
hypermaps~\cite{Cori,WL}. 

\begin{minipage}[t]{0.52\textwidth}
\centering
\begin{tikzpicture}[>=Stealth]
\coordinate (p1) at (0,0);
\coordinate (p2) at (4,0);
\coordinate (p3) at (2,1.6);
\coordinate (p4) at (2,3.5);
\coordinate (p5) at (-0.5,4);
\coordinate (p6) at (4.5,4);
\coordinate (p7) at (6,4);
\coordinate (p8) at (5.25,2.5);
\filldraw[fill=cyan!50](p1)--(p2)--(p3)--cycle;
\filldraw[fill=cyan!50](p1)--(p4)--(p5)--cycle;
\filldraw[fill=cyan!50](p2)--(p4)--(p6)--cycle;
\filldraw[fill=cyan!50](p6)--(p7)--(p8)--cycle;
\fill (p1) circle (2pt);
\fill (p2) circle (2pt);
\fill (p3) circle (2pt);
\fill (p4) circle (2pt);
\fill (p5) circle (2pt);
\fill (p6) circle (2pt);
\fill (p7) circle (2pt);
\fill (p8) circle (2pt);
\draw[color=black](2,2.2)circle(0.2cm);
\node at (2,2.2){\tiny\color{black}1};
\draw[color=black](5.25,1.25)circle(0.2cm);
\node at (5.25,1.25){\tiny\color{black}2};
\end{tikzpicture}\label{figure1}

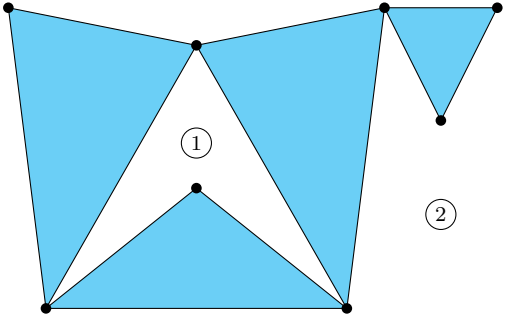
\captionof{figure}{a graphical representation of a $3$-hypermap on a sphere.}
\end{minipage}
\hfill
\begin{minipage}[t]{0.52\textwidth}
\centering
\begin{tikzpicture}[>=Stealth]
\coordinate (p1) at (0.3,0.8);
\coordinate (p2) at (2.7,0.8);
\coordinate (p3) at (0.3,3.2);
\coordinate (p4) at (2.7,3.2);
\coordinate (p5) at (1.5,0);
\coordinate (p6) at (-0.5,2);
\coordinate (p7) at (3.5,2);
\coordinate (p8) at (1.5,4);
\filldraw[fill=cyan!50](p1)--(p5)--(p2)--cycle;
\filldraw[fill=cyan!50](p2)--(p4)--(p7)--cycle;
\filldraw[fill=cyan!50](p3)--(p4)--(p8)--cycle;
\filldraw[fill=cyan!50](p1)--(p3)--(p6)--cycle;
\fill (p1) circle (2pt);
\fill (p2) circle (2pt);
\fill (p3) circle (2pt);
\fill (p4) circle (2pt);
\fill (p5) circle (2pt);
\fill (p6) circle (2pt);
\fill (p7) circle (2pt);
\fill (p8) circle (2pt);
\draw[color=black](1.5,2)circle(0.2cm);
\node at (1.5,2){\tiny\color{black}1};
\draw[color=black](-0.1,0.4)circle(0.2cm);
\node at (-0.1,0.4){\tiny\color{black}2};
\draw[-latex](-0.5,0)--(0.6,0);
\draw (0.6,0)--(p5);
\draw[-latex](p5)--(2.6,0);
\draw (2.6,0)--(3.5,0);
\draw[-latex](3.5,0)--(3.5,1.1);
\draw (3.5,1.1)--(p7);
\draw[-latex](p7)--(3.5,3.1);
\draw (3.5,3.1)--(3.5,4);
\draw[-latex](p8)--(2.6,4);
\draw (2.6,4)--(3.5,4);
\draw[-latex](-0.5,4)--(0.6,4);
\draw (0.6,4)--(p8);
\draw[-latex](-0.5,0)--(-0.5,1.1);
\draw (-0.5,1.1)--(p6);
\draw[-latex](p6)--(-0.5,3.1);
\draw (-0.5,3.1)--(-0.5,4);
\end{tikzpicture}\label{figure2}

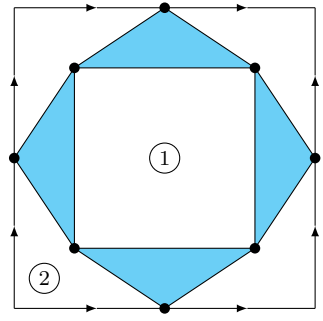
\captionof{figure}{a graphical representation of a $3$-hypermap on a torus.}
\end{minipage}

A {\it conjugation equivalence} from a labelled $l$-hypermap
 $(\sigma_0,\sigma_1,\sigma_2)$ to another one $(\tau_0,\tau_1,\tau_2)$ is a conjugation that simultaneously transforms 
 $\sigma_i$ to $\tau_i$, $i=0,1,2$, and preserves the labels. An automorphism of a labelled $l$-hypermap is a {\it conjugation equivalence} from the  $l$-hypermap to itself. 
Let $M_{g,k}^{[l]}(b_1,\dots,b_k)$ be the weighted count of connected, labelled $l$-hypermaps $(\sigma_0,\sigma_1,\sigma_2)$ of type $(g,k,|b|)$, where the cycle of $\sigma_2$ labelled $i$ has length $b_i$, for $i=1,\dots,k$, and $|b|=b_1+\dots+b_k$. 
The weight for  
an $l$-hypermap is equal to the reciprocal of the number of its automorphisms.
For example, when $l=5$, $g=0$, $k=3$ and $(b_1,b_2,b_3)=(1,2,2)$, there are four such 
$5$-hypermaps (see Figure 3)
and all their automorphisms are trivial.
So $M_{0,3}^{[5]}(1,2,2)=4$. 
(We shall notice that for the case when $l=2$, there is a systematic proportional factor 
$\prod_{j=1}^k b_j$ between the number in~\cite{DY1} and $M_{g,k}^{[l]}(b_1,\dots,b_k)$ here; 
see e.g. the footnote of~\cite{DY2}.)

{\center{
\begin{tikzpicture}[>=Stealth]
\coordinate (p1) at (0,0);
\coordinate (p2) at (-1.5,1.5);
\coordinate (p3) at (1.5,1.5);
\fill[cyan!50] (-2,-2) rectangle (2,2);
\filldraw[fill=white](p1)arc(0:90:1.5)arc(180:270:1.5);
\filldraw[fill=white](p1)arc(-90:0:1.5)arc(90:180:1.5);
\filldraw[fill=white](p1)arc(90:450:0.7);
\fill (p1) circle (2pt);
\fill (p2) circle (2pt);
\fill (p3) circle (2pt);
\draw[color=black](0,-0.7)circle(0.2cm);
\node at (0,-0.7){\tiny\color{black}1};
\draw[color=black](-0.75,0.75)circle(0.2cm);
\node at (-0.75,0.75){\tiny\color{black}2};
\draw[color=black](0.75,0.75)circle(0.2cm);
\node at (0.75,0.75){\tiny\color{black}3};
\end{tikzpicture}
\begin{tikzpicture}[>=Stealth]
\coordinate (p1) at (0,0);
\coordinate (p2) at (-1.5,1.5);
\coordinate (p3) at (1.5,1.5);
\fill[cyan!50] (-2,-2) rectangle (2,2);
\filldraw[fill=white](p1)arc(0:90:1.5)arc(180:270:1.5);
\filldraw[fill=white](p1)arc(-90:0:1.5)arc(90:180:1.5);
\filldraw[fill=white](p1)arc(90:450:0.7);
\fill (p1) circle (2pt);
\fill (p2) circle (2pt);
\fill (p3) circle (2pt);
\draw[color=black](0,-0.7)circle(0.2cm);
\node at (0,-0.7){\tiny\color{black}1};
\draw[color=black](-0.75,0.75)circle(0.2cm);
\node at (0.75,0.75){\tiny\color{black}2};
\draw[color=black](0.75,0.75)circle(0.2cm);
\node at (-0.75,0.75){\tiny\color{black}3};
\end{tikzpicture}
\begin{tikzpicture}[>=Stealth]
\coordinate (p1) at (-0.7,0.7);
\coordinate (p2) at (0.5,-0.5);
\coordinate (p3) at (1.7,-1.7);
\fill[cyan!50] (-2,-2) rectangle (2,2);
\filldraw[fill=white](p1)arc(90:0:1.2)arc(90:0:1.2)arc(270:180:1.2)arc(270:180:1.2);
\filldraw[fill=white](p1)arc(-45:315:0.6);
\fill (p1) circle (2pt);
\fill (p2) circle (2pt);
\fill (p3) circle (2pt);
\draw[color=black](-1.124,1.124)circle(0.2cm);
\node at (-1.124,1.124){\tiny\color{black}1};
\draw[color=black](-0.1,0.1)circle(0.2cm);
\node at (-0.1,0.1){\tiny\color{black}3};
\draw[color=black](1.1,-1.1)circle(0.2cm);
\node at (1.1,-1.1){\tiny\color{black}2};
\end{tikzpicture}
\begin{tikzpicture}[>=Stealth]
\coordinate (p1) at (-0.7,0.7);
\coordinate (p2) at (0.5,-0.5);
\coordinate (p3) at (1.7,-1.7);
\fill[cyan!50] (-2,-2) rectangle (2,2);
\filldraw[fill=white](p1)arc(90:0:1.2)arc(90:0:1.2)arc(270:180:1.2)arc(270:180:1.2);
\filldraw[fill=white](p1)arc(-45:315:0.6);
\fill (p1) circle (2pt);
\fill (p2) circle (2pt);
\fill (p3) circle (2pt);
\draw[color=black](-1.124,1.124)circle(0.2cm);
\node at (-1.124,1.124){\tiny\color{black}1};
\draw[color=black](-0.1,0.1)circle(0.2cm);
\node at (-0.1,0.1){\tiny\color{black}2};
\draw[color=black](1.1,-1.1)circle(0.2cm);
\node at (1.1,-1.1){\tiny\color{black}3};
\end{tikzpicture}

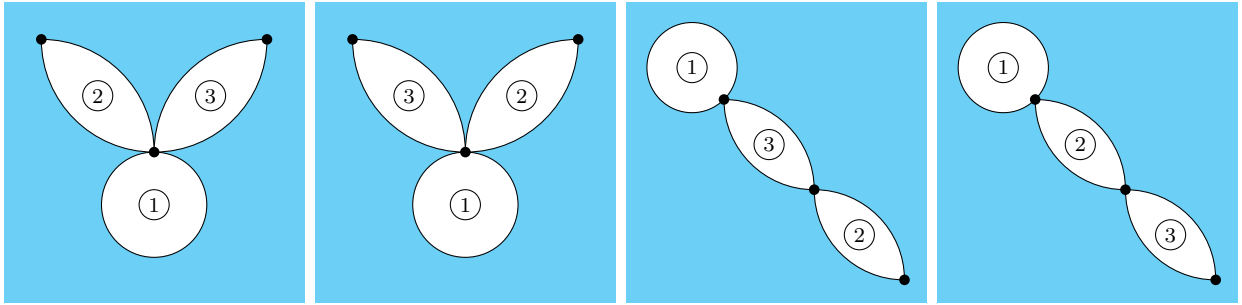
\captionof{figure}{5-hypermaps of type $(0,3,5)$ with $(b_1,b_2,b_3)=(1,2,2)$. }
}}

For $k\ge1$, define the $k$-point functions for the numbers $M^{[l]}_{g,k}(b_1,\dots,b_k)$ by
\beq
\mathcal{C}_{k}(\lambda_1,\dots,\lambda_k;n;l):=
\sum_{b_1,\dots,b_k\ge1}\prod_{j=1}^k\frac{b_j}{\lambda_j^{b_j+1}}M^{[l]}_{k}(b_1,\dots,b_k;n),
\eeq
where
\beq
M^{[l]}_k(b_1,\dots,b_k; n):=\sum_{g=0}^{[(l-1)\frac{|b|}{2l}-\frac{k}{2}+\frac12]}M^{[l]}_{g,k}(b_1,\dots,b_k)n^{2-2g-k+(l-1)\frac{|b|}{l}}.
\eeq

The following theorem, which generalizes~\cite[Theorem~1.1.1]{DY1} for ribbon graphs (the $l=2$ case), 
 will be proved in Section~\ref{conclusion} using the matrix-resolvent method~\cite{BDY1,DY1,FYZ,Y20} and the Carlet--van de Leur--Posthuma--Shadrin theorem~\cite{CLPS} (see~\eqref{CLPSthm} below).
\begin{theorem}\label{cnj1}
For any $l\ge2$, the $k$-point function has the following expression:
\beq\label{1point}
\mathcal{C}_1(\lambda;n;l)=\sum_{m\ge1}\frac{(-1)^{m}(lm)!}{l^{m}m!\lambda^{lm+1}}\sum_{s=0}^{m}(-1)^{s}\binom{m}{s}\binom{n+ls}{lm+1},
\eeq
\begin{align}\label{npoint}
&\mathcal{C}_{k}(\lambda_1,\dots,\lambda_k; n; l)
=-\sum_{\sigma\in S_k/C_k}
\frac{\Tr \,  M \bigl(\lambda_{\sigma(1)},n;l\bigr)\cdots 
 M\bigl(\lambda_{\sigma(k)},n;l\bigr)}{\prod_{i=1}^k (\lambda_{\sigma(i)}-\lambda_{\sigma(i+1)})}
-\delta_{k,2}
\frac{l-1}
{ (\lambda_1-\lambda_2)^2},\quad k\ge2,
\end{align}
where
$M(\lambda,n;l)$ is an $l\times l$ matrix with the $(i,j)$-entry being
\beq\label{Mm1}
 M(\lambda,n;l)_{i,j}=\delta_{i,j}+\left\{
\begin{array}{ll}
y(\lambda,n,i,j;l),& j=1,\dots,l-1,\\
-ny(\lambda,n,i,0;l),& j=l,\\
\end{array}
\right.
\eeq
with
\begin{align}\label{gm1}
y(\lambda,n,i,j;l)=\lambda^{i-j-l}\sum_{m\ge0}\frac{1}{l^{m}m!\lambda^{lm}}
\sum_{s=0}^{m}(-1)^{s}\binom{m}{s}
(n+1-j-ls)_{lm+l-1-i+j}.
\end{align}
Here $(a)_\ell=\Gamma(a+\ell)/\Gamma(a)$ denotes the increasing Pochhammer symbol. 
\end{theorem}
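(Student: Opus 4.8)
The plan is to combine the Carlet--van de Leur--Posthuma--Shadrin theorem~\eqref{CLPSthm}, which identifies the generating series of the numbers $M^{[l]}_{g,k}$ with correlation functions of a specific integrable hierarchy, with the matrix-resolvent method of~\cite{BDY1,DY1,Y20}, which produces universal trace formulas for the $k$-point functions attached to such a hierarchy. Concretely, I would first invoke CLPS to rewrite $\mathcal{C}_k(\lambda_1,\dots,\lambda_k;n;l)$ as a suitable generating series of the connected correlation functions of that hierarchy, and then realize the relevant Lax (difference) operator in companion form as an $l\times l$ matrix. The matrix $M(\lambda,n;l)$ in the statement is then to be recognized as the \emph{basic matrix resolvent} of this operator, namely the unique $l\times l$ matrix-valued solution of the stationary Lax equation normalized by $M=\mathrm{Id}+O(\lambda^{-1})$ as $\lambda\to\infty$.

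Second, I would derive the universal $k$-point identity, which is the heart of the matrix-resolvent method: the connected $k$-point correlation functions are packaged, for $k\ge2$, into
\[
-\sum_{\sigma\in S_k/C_k}\frac{\Tr\bigl(M(\lambda_{\sigma(1)},n;l)\cdots M(\lambda_{\sigma(k)},n;l)\bigr)}{\prod_{i=1}^k(\lambda_{\sigma(i)}-\lambda_{\sigma(i+1)})},
\]
the sum running over cyclic equivalence classes of orderings of the variables, together with a scalar correction that survives only when $k=2$. I would obtain~\eqref{npoint} by specializing this identity to the point in phase space dictated by CLPS, checking that the $k=2$ correction equals $(l-1)/(\lambda_1-\lambda_2)^2$; this correction is the contribution of the constant (dynamical-variable-independent) part of the two-point function, and its coefficient $l-1$ should emerge from the companion-matrix structure, in which the resolvent is an $l\times l$, rather than scalar, object.

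Third, I would produce the explicit entries~\eqref{Mm1}--\eqref{gm1}. Expanding the resolvent equation for $M(\lambda,n;l)$ in powers of $\lambda^{-l}$ turns it into a recursion whose coefficients satisfy a linear difference relation in $n$; I would solve this recursion order by order and resum it. Identifying the closed form with the alternating-binomial/Pochhammer expression in~\eqref{gm1} is the main bookkeeping task: the shifts $n+1-j-ls$ and the rising factorial of length $lm+l-1-i+j$ must be matched against the recursion, and the normalization pinned down by $M=\mathrm{Id}+O(\lambda^{-1})$. The one-point formula~\eqref{1point} would then follow either as the (separately treated) $k=1$ specialization, where the cyclic-trace formula degenerates and one instead reads off the appropriate diagonal combination of $M$, or by direct evaluation of the corresponding single correlation series, after which the double sum over $m$ and $s$ and the factor $(lm)!/(l^m m!)$ assemble routinely.

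I expect the principal obstacle to be the explicit solution of the resolvent recursion and its exact resummation into Pochhammer symbols: matching the combinatorial shifts \emph{exactly}, rather than up to an undetermined constant, requires controlling both the initial data of the recursion (equivalently, the normalization of $M$) and the precise difference operator supplied by CLPS. A secondary but genuine difficulty is verifying that the scalar correction in~\eqref{npoint} is precisely $\delta_{k,2}(l-1)/(\lambda_1-\lambda_2)^2$, since this demands computing the regularized constant part of the trace of two resolvents rather than merely its singular behavior along the diagonal $\lambda_1=\lambda_2$.
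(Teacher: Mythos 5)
Your overall architecture coincides with the paper's: the Carlet--van de Leur--Posthuma--Shadrin identification \eqref{CLPSthm} reduces everything to $k$-point functions of a particular tau-function of the extended bigraded Toda hierarchy, and the matrix-resolvent method (invoked in the paper via \cite[Proposition 1.6]{FYZ}) supplies exactly the cyclic-trace formula \eqref{npoint}, correction term included. Two things, however, separate your plan from a proof. First, ``the point in phase space dictated by CLPS'' must actually be computed: the paper does this in Lemma~\ref{lemma-initial}, deriving from the string equation \eqref{string} that the initial Lax operator is $L_{\rm ini}=\TT^{l-1}+x\,\TT^{-1}$; without this the resolvent you propose to expand is not yet specified. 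Second, and more importantly, the derivation of the explicit entries \eqref{gm1} is where your proposal stops being an argument: you propose to solve the stationary-Lax recursion order by order and ``resum'' it, and you yourself flag this resummation as the principal obstacle without carrying it out. The paper avoids that recursion entirely. It exhibits closed-form wave functions $\psi_A,\psi_B$ of $L_{\rm ini}$ (Lemma~\ref{lemma-psi}; explicit hypergeometric-type series, $\psi_B$ being essentially the quantum-spectral-curve solution of Do--Manescu), assembles them into a matrix $\Psi$, writes the inverse $\Psi^{-1}=\Psi^*$ in closed form via dual wave functions, verifies $\Psi\Psi^*=I_l$ by reduction to the binomial identity \eqref{tCfin}, and obtains the resolvent as $\lambda\,\Psi P\,\Psi^{-1}$ (Lemma~\ref{bispRM}); the entries of this product of two explicit series are literally the double sums over $m$ and $s$ in \eqref{gm1}, so no resummation is ever needed. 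Your route is not wrong in principle, but as written the central formula is a declared intention rather than a derivation.

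A smaller inaccuracy worth fixing: the normalization is not $M=\mathrm{Id}+O(\lambda^{-1})$. From \eqref{Mm1} one has $M\to\diag(1,\dots,1,0)$ as $\lambda\to\infty$ (it is conjugate to the projector $P_{l-1}$ of Lemma~\ref{bispRM}), and this is precisely why the $k=2$ correction carries the coefficient $l-1=\Tr\,\diag(1,\dots,1,0)$ rather than $l$; with your stated normalization the constant term in the two-point function would not cancel.
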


\begin{remark}
We note that
results equivalent to formula~\eqref{1point} 
had been obtained by Zagier~\cite{Zagier}. 
Indeed, the following formula 
\beq\label{Zagierformula}
 \sum_{n\ge1,\,g\ge0}  lm M_{g,1}^{[l]} (lm) n^{1-2g+(l-1)m} Y^{n-1} 
 = \frac{(lm)!}{ m!\, l^{m}}\frac{(1-Y^l)^m}{(1-Y)^{lm+2}}, \quad m\geq1,
\eeq
is a special case of \cite[Theorem~1 and Application~1]{Zagier}, which leads to~\eqref{1point}. 
Moreover, it is shown in~\cite[Application~4]{Zagier} the 
equivalence between~\eqref{Zagierformula} and the following formula
\beq
lmM^{[l]}_{g,1}(lm)
=\frac{(lm)!}{l^m m!(1-2g+(l-1)m)!}[t^{2g}]\Bigl(t^{(l-1)m+2}\frac{(1-e^{-lt})^m}{(1-e^{-t})^{lm+2}}e^{-t}\Bigr),\quad g\ge0,\label{Zagierformula2}
\eeq
as a special case of~\cite[Application~4]{Zagier}. 
We also note that there is a straightforward equivalence between~\eqref{1point} and a special case of a work of Stanley~\cite{Stanley}.
\end{remark}

It turns out that the formal functions appearing in
the above matrix $M(\lambda,n;l)$ can also be expressed in terms of 
 certain algebraic curves introduced in~\cite{DYZ3} (see Theorem~\ref{curve}).
Indeed, following~\cite{DYZ3}, define an odd Laurent series
\beq
 X =\frac{1}{u} - \frac{r-1}6 u + \frac{(r-1)(2r+1)(r-3)}{360} u^3 + \cdots
  \quad\in\;\frac1u\QQ[r][[u^2]] 
\eeq
as the unique solution to 
\beq\label{deftc1}
 \frac{(X+1)^{r+1} - (X-1)^{r+1}}{2\, (r+1)} = \frac1{u^r} 
\eeq
and define coefficients $\widetilde C_s(r,i,j)\in\QQ[r,i,j]$ by
\beq\label{deftc2}
 - (X+1)^i (X-1)^j \frac{dX}{du} = \sum_{s\geq0} \widetilde C_s(r,i,j) u^{s-i-j-2}\;. 
\eeq
For example, 
\beq\label{tCini}
\widetilde C_0(r,i,j) = 1 , \quad \widetilde C_1(r,i,j) =  i-j.
\eeq
The following theorem will be proved in Section~\ref{conclusion}.
\begin{theorem}\label{curve}
The series $y(\lambda,n,i,j;l)$ in Theorem~\ref{cnj1} has the following alternative expression:
\beq\label{ynew}
y(\lambda,n,i,j;l)=
-\delta_{i,j}+\frac{\delta_{i,l}\delta_{j,0}}{n}+
\sum_{m\ge\lceil\frac{i-j}{l-1}\rceil}\frac{2^m(m)_{(l-1)m+j-i}}{(2\lambda)^{lm+j-i}}
\widetilde C_{(l-1)m+j-i}(l-1,n+l-1-i,-n+j-1).
\eeq
Moreover, the $1$-point function can be expressed as
\beq\label{1point2}
\mathcal{C}_1(\lambda;n;l)=\sum_{m\ge1}\frac{(m+1)_{(l-1)m}}{2^{(l-1)m}\lambda^{lm+1}}
\widetilde C_{(l-1)m+1}(l-1,n,-n).
\eeq
\end{theorem}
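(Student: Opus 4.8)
The plan is to evaluate the curve coefficients $\widetilde C_s$ entering \eqref{ynew} and \eqref{1point2} as explicit coefficient‑extractions, and to show that for the particular arguments dictated by Theorem~\ref{curve} the implicitly defined series $X(u)$ drops out, leaving a closed expression that matches the Pochhammer sums in \eqref{gm1} and \eqref{1point}. First I would rewrite \eqref{deftc2} as the formal‑residue identity
\beq
\widetilde C_s(r,\alpha,\beta)=\big[u^{s-\alpha-\beta-2}\big]\Big(-(X+1)^{\alpha}(X-1)^{\beta}\tfrac{dX}{du}\Big)=\mathrm{res}_{u}\Big[-(X+1)^{\alpha}(X-1)^{\beta}\,u^{\alpha+\beta+1-s}\,dX\Big].
\eeq
Writing the summation index of \eqref{ynew} as $\mu$ and putting $m:=\mu-1$, I specialize to $r=l-1$, $\alpha=n+l-1-i$, $\beta=-n+j-1$ and $s=(l-1)\mu+j-i$. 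The crucial point is that then $\alpha+\beta+1-s=-(l-1)m=-rm$, so $u^{\alpha+\beta+1-s}=(u^{-r})^{m}$, and by \eqref{deftc1} one has $u^{-r}=\frac{(X+1)^{l}-(X-1)^{l}}{2l}$, a polynomial in $X$. Thus the transcendental dependence on $X(u)$ disappears and only its reciprocal enters algebraically.

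Next I would change the variable to $w=1/X$. Since $X=u^{-1}(1+O(u^{2}))$ one has $w=u+O(u^{3})$, a genuine invertible formal power series, so the residue is preserved with no sign ambiguity. Substituting $X=1/w$, $dX=-w^{-2}dw$ and $u^{-r}=w^{-l}\frac{(1+w)^{l}-(1-w)^{l}}{2l}$ gives
\beq
\widetilde C_s(l-1,\alpha,\beta)=\big[w^{\alpha+\beta+lm+1}\big]\left\{(1+w)^{\alpha}(1-w)^{\beta}\,\frac{\big((1+w)^{l}-(1-w)^{l}\big)^{m}}{(2l)^{m}}\right\},
\eeq
and expanding $\big((1+w)^{l}-(1-w)^{l}\big)^{m}$ by the binomial theorem reduces everything to the top coefficients $\big[w^{a+b+1}\big](1+w)^{a}(1-w)^{b}$ with $a=\alpha+l(m-s')$, $b=\beta+ls'$. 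The combinatorial heart is the identity
\beq
\big[w^{N}\big](1+w)^{N-1-b}(1-w)^{b}=\frac{2^{N}}{N!}\,(-b)_{N},\qquad N\in\ZZ_{\ge0},
\eeq
for a formal variable $b$, which I would prove by observing that both sides are polynomials in $b$ of degree $N$: for $b\in\{0,1,\dots,N-1\}$ the left side vanishes because $(1+w)^{N-1-b}(1-w)^{b}$ is then a polynomial of degree $N-1<N$, while the right side vanishes since $(-b)_{N}=\prod_{t=0}^{N-1}(t-b)$ has exactly those roots; comparing leading coefficients via $\log\frac{1-w}{1+w}=-2w+O(w^{3})$ shows both equal $(-2)^{N}/N!$. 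Feeding this back, using $-b=n+1-j-ls'$ and $N=a+b+1=lm+l-1-i+j$ (independent of $s'$), the sum collapses to
\beq
\widetilde C_s(l-1,\alpha,\beta)=\frac{2^{N}}{(2l)^{m}N!}\sum_{s'=0}^{m}(-1)^{s'}\binom{m}{s'}\,(n+1-j-ls')_{N}.
\eeq

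I would then compare this with the series defining $y$ in \eqref{gm1}: the summand of \eqref{ynew} indexed by $\mu$ matches the $m=(\mu-1)$‑th term there, the elementary prefactors agreeing at once from $m!\,(m+1)_{s}=(m+s)!=N!$ and $(m+1)_{(l-1)m}=(lm)!/m!$. This proves \eqref{ynew} for every $\mu$ with $s=(l-1)\mu+j-i\ge0$, which is precisely the printed range $\mu\ge\lceil\frac{i-j}{l-1}\rceil$. For the relevant ranges $1\le i\le l$, $0\le j\le l-1$ the finitely many boundary terms are checked directly: when $i\le j$ the extra $\mu=0$ summand either vanishes (as $(0)_{j-i}=0$ for $i<j$) or equals $1$ (for $i=j$, since $\widetilde C_0=1$), and is absorbed by $-\delta_{i,j}$; and for $(i,j)=(l,0)$ the $m=0$ term of \eqref{gm1}, which has no admissible curve counterpart, evaluates to $(n+1)_{-1}=1/n$, accounting for $\delta_{i,l}\delta_{j,0}/n$. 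The one‑point formula \eqref{1point2} follows from \eqref{1point} by the identical residue‑and‑lemma computation, now with $\alpha=n$, $\beta=-n$, where the same cancellation $\alpha+\beta+1-s=-rm$ occurs.

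The conceptual key is the exponent cancellation $\alpha+\beta+1-s=-rm$, which converts the implicit‑curve residue into an explicit polynomial one; once it is in hand the rest is essentially forced. I expect the most delicate part to be the boundary bookkeeping—tracking the finitely many terms responsible for the $-\delta_{i,j}$ and $\delta_{i,l}\delta_{j,0}/n$ corrections and the exact summation cutoff—together with pinning down the residue orientation so that no spurious sign survives.
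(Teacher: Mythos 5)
Your argument is correct and reaches the first part of the theorem by a genuinely different route from the paper. The paper obtains the equivalence of \eqref{gm1} and \eqref{ynew} from the polynomial identity \eqref{tCfin}, which it establishes inside the proof of $\Psi\Psi^*\equiv I_l$ by checking the base case $m=0$, $y=0$ and then propagating with the recursions \eqref{tcid1}--\eqref{tcid2} together with polynomiality in $y$; the wave-function product formula \eqref{fifjTmT} is what makes \eqref{tCfin} the relevant identity. You prove the same identity in closed form by a direct residue computation on the curve: the exponent cancellation $\alpha+\beta+1-s=-rm$, which converts $u^{\alpha+\beta+1-s}$ into a polynomial in $X$ via \eqref{deftc1}, the substitution $w=1/X$, and the interpolation identity $[w^{N}](1+w)^{N-1-b}(1-w)^{b}=\tfrac{2^{N}}{N!}(-b)_{N}$ are all sound, and your resulting expression for $\widetilde C_s$ is literally \eqref{tCfin} after the substitution $t=m-s'$. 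Your bookkeeping of the cutoff $\mu\ge\lceil\tfrac{i-j}{l-1}\rceil$ and of the $-\delta_{i,j}$ and $\delta_{i,l}\delta_{j,0}/n$ corrections is also right. The trade-off: your derivation is self-contained and yields the closed form in one stroke, whereas the paper's reuses the recursions imported from~\cite{DYZ3} but needs only a trivial base case.

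One caveat: you assert rather than carry out the final step for \eqref{1point2}, and there the assertion does not quite close. Your own formula gives $\widetilde C_{(l-1)m+1}(l-1,n,-n)=\tfrac{2^{(l-1)m+1}}{l^{m}}\sum_{t=0}^{m}(-1)^{m-t}\binom{m}{t}\binom{n+lt}{lm+1}$, and matching this against \eqref{1point} forces the prefactor $(m+1)_{(l-1)m}\big/\bigl(2^{(l-1)m+1}\lambda^{lm+1}\bigr)$ --- one more power of $2$ in the denominator than the printed \eqref{1point2}. The discrepancy is real: for $l=2$, $m=1$ one computes $\widetilde C_2(1,n,-n)=[u^2](1+u)^n(1-u)^{-n}=2n^2$, so \eqref{1point2} as printed gives $2n^2/\lambda^3$, whereas \eqref{1point} (and the count $2M^{[2]}_{0,1}(2)=1$) gives $n^2/\lambda^3$. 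So either locate a compensating factor of $2$ in your computation or record that the denominator in \eqref{1point2} should be $2^{(l-1)m+1}$; had you executed the ``identical computation'' instead of asserting it, you would have caught this.
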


Define 
the normalized partition function of enumeration of $l$-hypermaps by
\beq
Z^{\rm hyperone}(x,{\bf s};\e;l):=\exp\biggl(\,\sum_{g\ge0,\,k\ge1}\frac{\e^{2g-2}}{k!} \sum_{b_1,\dots,b_k\ge1} x^{2-2g-k+\frac{l-1}{l}|b|}M^{[l]}_{g,k}(b_1,\dots,b_k)\prod_{i=1}^kb_{i} s_{b_i}\biggr),
\eeq
where ${\bf s}=(s_1,s_2,s_3,\dots)$. Its logarithm $\log Z^{\rm hyperone}(x,{\bf s};\e;l)=:\F^{\rm hyperone}(x,{\bf s};\e;l)$
is called the {\it normalized free energy}.
Let us introduce the {\it corrected 
free energy}~$\F=\F(x,{\bf s};\e;l)$ by 
$$
\F(x,{\bf s};\e;l) = \F^{\rm hyperone}(x,{\bf s};\e;l) + \frac{x^2}{2\e^2}\Bigl(\log x-\frac32\Bigr)
-\frac{\log x}{12}
+\sum_{g\ge2}\frac{\e^{2g-2}B_{2g}}{4g(g-1)x^{2g-2}} .
$$
Here $B_j$ denotes the $j$th Bernoulli number.
The exponential $\exp \F =:Z$ is called the {\it corrected partition function}. 
And define the {\it genus $g$ free energy of $l$-hypermap} $\F_g(x,\mathbf{s};l)$ by $\F(x,\mathbf{s};\e;l)=:\sum_{g\ge0}\e^{2g-2}\F_g(x,\mathbf{s};l)$.
For the case when $l=2$, the corrected free energy or partition function
was introduced in~\cite{Du2} (cf.~\cite{AvM, DY1, Yang}), and this partition function 
is closely related~\cite{Du2} (cf.~\cite{CLPS0, CLPS, Yang, Yang25}) 
to the topological partition function of a particular two-dimensional Frobenius manifold with Frobenius potential 
$$
 F=\frac 12(v^1)^2v^0+\frac12(v^0)^2\log v^0-\frac34(v^0)^2.
$$
We refer the reader to~\cite{Du, DZ-norm} about the notion of Frobenius manifold.

For any $l\ge2$, Carlet--van de Leur--Posthuma--Shadrin~\cite{CLPS} proved that the 
partition function $Z^{\rm hyperone}(x,{\bf t};\e;l)|_{x=1}$ is equal to  
the topological partition function (cf.~\cite{DZ-norm, Givental}), aka the total descendant potential, of the $l$-dimensional 
Hurwitz--Frobenius manifold $M_{0;1,l-1}$ (cf.~\cite{AK, Du, LZZ}) restricted to the resonance flows. 
According to~\cite{CLPS, LZZ} (cf.~\cite{CDZ}), 
 after performing a generalized Madelung--Hasimoto transformation and a Miura-type transformation, 
the Dubrovin--Zhang hierarchy of $M_{0;1,l-1}$
becomes the extended constrained KP hierarchy. 
Recall that the constrained KP hierarchy was introduced in \cite{Cheng} (see also~\cite{CL, KSS}), 
and that for the case when $l=2$ the constrained KP hierarchy is the celebrated 
AKNS hierarchy (aka the nonlinear Schr\"odinger hierarchy).
It is known (cf.~\cite{Du, LZZ}) that $M_{0;1,l-1}$ is a Legendre-type transformation of the Frobenius manifold $M(A_{l-1}, 1)$. 
The latter was constructed by 
Dubrovin--Zhang~\cite{DZ-Weyl}, whose Frobenius potential is equal to the genus~0 
primary free energy of the orbifold projective line~$\mathbb{P}^1_{l-1,1}$~\cite{Milanov--Tseng}. 
Let $\mathcal{D}=\mathcal{D}(\bT;\e;l)$, 
$\bT=(T^{a,j})_{a=0,\dots,l-1,j\ge0}$, be the 
topological partition function of $M_{0;1,l-1}$, which satisfies the following string equation
\beq\label{string}
\sum_{a=0}^{l-1}\sum_{j\ge1}T^{a,j}\frac{\partial \log \mathcal{D}}{\partial T^{a,j-1}}
+\frac{1}{2\e^2}T^{0,0}T^{l-1,0}+\frac{1}{2(l-1)\e^2}\sum_{a=1}^{l-2}T^{a,0}T^{l-1-a,0}=\frac{\partial \log \mathcal{D}}{\partial T^{1,0}}.
\eeq

Due to the relation to $M(A_{l-1}, 1)$ and by \cite[Theorem~6]{Y24}, we know that 
\beq
\mathcal{D} = Z^{\{1\}}
\eeq
where $Z^{\{\kappa\}}$, $\kappa=0,\dots,l-1$, denotes the $\kappa$th partition function of $M(A_{l-1}, 1)$ (see~\cite{Y24} for the definition
of the $\kappa$th partition function) 
with $Z^{\{0\}}$ 
being the topological partition function of $M(A_{l-1}, 1)$.
Write $\log \mathcal{D}(\bT;\e;l)=:\sum_{g\ge0} \e^{2g-2} \F^{\{1\}}_g(\bT;l)$. Then 
by \cite[Theorem 6]{Y24} we know that for $g=0$,
$\F^{\{1\}}_0(\bT;l)$ has the following expression
\beq\label{FF10}
\F^{\{1\}}_0(\bT;l) = \frac12\sum_{a,b=0,\dots,l-1}\sum_{j_1,j_2\ge0} (T^{a,j_1}-\delta^{a,1}\delta^{j_1,0})(T^{b,j_2}-\delta^{b,1}\delta^{j_2,0}) \Omega^{\mathbb{P}^1_{l-1,1}}_{a,j_1;b,j_2}({\bf v}^{\{1\}}(\bT;l)),
\eeq
where $\Omega^{\mathbb{P}^1_{l-1,1}}_{a,j_1;b,j_2} $ 
are genus~0 two-point correlation functions for $M(A_{l-1}, 1)$ and 
$
{\bf v}^{\{1\}}(\bT;l)=(v^{0,\{1\}}(\bT;l),\dots,v^{l-1,\{1\}}(\bT;l))
$
 is the $\kappa$th solution to the principal hierarchy of $M(A_{l-1}, 1)$ with $\kappa=1$, 
and that for $g\ge1$,
\beq\label{FF1g}
\F^{\{1\}}_g(\bT;l)=F_g^{\mathbb{P}^1_{l-1,1}} 
\biggl({\bf v}^{\{1\}}(\bT;l), \frac{\p {\bf v}^{\{1\}}(\bT;l)}{\p T^{0,0}}, \dots, \frac{\p^{3g-2} {\bf v}^{\{1\}}(\bT;l)}{\p (T^{0,0})^{3g-2}}\biggr),
\eeq
where $F_g^{\mathbb{P}^1_{l-1,1}} $ denotes the genus $g$ free energy of $M(A_{l-1},1)$ in the jet-variables~\cite{DZ-norm}.

We recall that $Z^{\{\kappa\}}$ for any $\kappa$ 
is a tau-function for the Dubrovin--Zhang hierarchy of $M(A_{l-1}, 1)$.
It is proved in~\cite{CLPS} (cf.~\cite{BW}) that the Dubrovin--Zhang hierarchy of $M(A_{l-1}, 1)$ is Miura equivalent to the 
extended bigraded Toda hierarchy, which is defined~\cite{Carlet} by
\begin{align}
&\e\frac{\p L}{\p T^{0,j}}=\frac{2}{j!}\Bigl[\bigl(L^{j}(\log L-\frac{l}{2l-2}c_j)\bigr)_+,L\Bigr],\quad j\ge0,\\
&\e\frac{\p L}{\p T^{a,j}}=\frac{1}{(l-1)(\frac{a}{l-1})_{j+1}}\Bigl[\bigl(L^{\frac{a}{l-1}+j}\bigr)_+,L\Bigr],
\quad j\ge0, a=1,\dots,l-1,\\
&\e\frac{\p L}{\p T^{l-1,j}}=\frac{1}{(j+1)!}\Bigl[\bigl(L^{j+1}\bigr)_+,L\Bigr],
\quad j\ge0,
\end{align}
where $L$ is the Lax operator given by 
\beq\label{definitionL}
L:= \TT^{l-1} + u_{l-2} \TT^{l-2} + \dots +u_1 \TT +u_{0} + u_{-1} \TT^{-1}, \quad {\rm with}~ \TT:=e^{\e \p_x},
\eeq
 $c_j=\sum_{i=1}^j\frac1i$, $j\ge0$, and for the details about $\log L$ see~\cite{Carlet, CDZ}.
 Note that here the normalization of flows is the one suggested by Gromov--Witten theory.
Using the string equation~\eqref{string} we will prove in Section~\ref{review} the following lemma.
\begin{lemma}\label{lemma-initial}
The initial value for the solution ${\bf u}^{\{1\}}={\bf u}^{\{1\}}(\bT;\e;l)$ of the extended bigraded Toda hierarchy 
corresponding to $\mathcal{D} = Z^{\{1\}}$ is given by
 \beq\label{initial}
u^{\{1\}}_{-1}({\bf T};\e;l) |_{T^{a,j}=x\delta^{a,0}\delta^{j,0}}  \equiv x, \quad u^{\{1\}}_{i}({\bf T};\e;l) |_{T^{a,j}=x\delta^{a,0}\delta^{j,0}} \equiv 0  ~(i=0,\dots, l-2).
\eeq
\end{lemma}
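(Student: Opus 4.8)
The plan is to reconstruct the Lax coefficients ${\bf u}^{\{1\}}=(u_{-1},u_0,\dots,u_{l-2})$ of the extended bigraded Toda hierarchy from the tau-function $\mathcal{D}=Z^{\{1\}}$ and then to evaluate them along the locus $T^{a,j}=x\delta^{a,0}\delta^{j,0}$ by means of the string equation~\eqref{string}. Under the Dubrovin--Zhang construction for $M(A_{l-1},1)$ together with the Madelung--Hasimoto and Miura transformations recalled above, the $u_i$ are differential polynomials in the $T^{0,0}$-jets of the flat coordinates ${\bf v}^{\{1\}}$ (with $T^{0,0}$ the spatial variable, as in~\eqref{FF1g}), and at the dispersionless level they are read off from the superpotential $\lambda(p)=p^{l-1}+u_{l-2}p^{l-2}+\dots+u_0+u_{-1}p^{-1}$ evaluated at ${\bf v}^{\{1\}}$. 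So the problem reduces to computing ${\bf v}^{\{1\}}$ and its jets along the locus.

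First I would differentiate~\eqref{string} in a primary time and restrict. Applying $\partial_{T^{c,0}}$ and setting $T^{a,j}=x\delta^{a,0}\delta^{j,0}$, each summand of $\sum_{a,j\ge1}T^{a,j}\partial_{T^{a,j-1}}\log\mathcal{D}$ carries a factor $T^{a,j}$ with $j\ge1$ and drops out, so only the quadratic term survives and
\[
\partial_{T^{c,0}}\partial_{T^{1,0}}\log\mathcal{D}\big|_{T^{a,j}=x\delta^{a,0}\delta^{j,0}}=\frac{x}{2\e^2}\,\delta_{c,l-1}.
\]
Its genus-$0$ part, together with~\eqref{FF10} and the genus-zero two-point relation $\partial_{T^{c,0}}\partial_{T^{1,0}}\F^{\{1\}}_0=\Omega^{\mathbb{P}^1_{l-1,1}}_{c,0;1,0}({\bf v}^{\{1\}})$ (which equals the lowered flat coordinate $v_c$ because $T^{1,0}$ is conjugate to the unit) and the anti-diagonal metric read off from the quadratic term of~\eqref{string}, yields ${\bf v}^{\{1\}}|_{T^{a,j}=x\delta^{a,0}\delta^{j,0}}=(x,0,\dots,0)$. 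Differentiating~\eqref{string} once more gives $\partial_{T^{c,0}}\partial_{T^{d,0}}\partial_{T^{1,0}}\log\mathcal{D}=\e^{-2}\eta_{cd}$ at the locus, while all higher primary derivatives vanish there; taking $d=0$ and raising indices shows $\partial_{T^{0,0}}{\bf v}^{\{1\}}|_{\mathrm{locus}}=(1,0,\dots,0)$ and $\partial_{T^{0,0}}^{\,k}{\bf v}^{\{1\}}|_{\mathrm{locus}}=0$ for $k\ge2$. Hence, as a function of $x=T^{0,0}$ along the locus, ${\bf v}^{\{1\}}$ is \emph{exactly} the linear map $(x,0,\dots,0)$, with all higher $x$-jets vanishing.

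Substituting ${\bf v}^{\{1\}}=(x,0,\dots,0)$ into the superpotential collapses it to $\lambda(p)=p^{l-1}+xp^{-1}$, giving the dispersionless values $u_{-1}=x$ and $u_0=\dots=u_{l-2}=0$, which for $l=2$ already recovers Dubrovin's initial data for the extended Toda hierarchy. To obtain the exact identity~\eqref{initial} I would then feed the jet data of the previous paragraph into~\eqref{FF1g} and the quasi-Miura transformation: since ${\bf v}^{\{1\}}$ is linear in $x$ with only the first jet $\partial_{T^{0,0}}v^0=1$ nonzero, every dispersive correction is a differential polynomial evaluated on this very degenerate jet, and one must show that the surviving monomials do not contribute to $u_{-1},\dots,u_{l-2}$.

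The hard part will be exactly this last step. The string-equation computation is robust and immediately pins down the genus-$0$ answer, but proving that ${\bf u}^{\{1\}}$ is \emph{$\e$-independent} along the locus requires controlling the higher-genus free energies $F_g^{\mathbb{P}^1_{l-1,1}}$ of~\eqref{FF1g} on the degenerate jet $\partial_{T^{0,0}}{\bf v}^{\{1\}}=(1,0,\dots,0)$, or equivalently making the tau-to-Lax dictionary of the extended bigraded Toda hierarchy explicit enough to verify the cancellation of all dispersive terms in $u_{-1},\dots,u_{l-2}$. Alongside this, one must track the normalizations carefully---the factor $\tfrac12$ in the quadratic term of~\eqref{string}, the index-raising with the anti-diagonal metric, and the precise match between the flat coordinate $v^0$ and the Lax coefficient $u_{-1}$---in order to land exactly on~\eqref{initial}.
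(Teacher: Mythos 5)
Your opening move --- differentiating the string equation \eqref{string} in a primary direction and restricting to the locus $T^{a,j}=x\delta^{a,0}\delta^{j,0}$ --- is exactly the paper's first (and essentially only explicit) step; the paper records the outcome as $\partial^2\log Z^{\{1\}}/\partial T^{1,0}\partial T^{a,0}|_{\rm locus}=\tfrac{x}{\e^2}\delta_{a,l-1}$ (your $\tfrac{x}{2\e^2}$ differs by the normalization of the quadratic term of \eqref{string}, a discrepancy you rightly flag but must actually resolve, since it propagates into the claimed value of $u^{\{1\}}_{-1}$). From there, however, the two arguments diverge, and yours has a genuine gap that you yourself name: you recover only the dispersionless values of $u_{-1},\dots,u_{l-2}$ from the flat coordinates and the superpotential, and then you still must show that all dispersive corrections coming from \eqref{FF1g} and the quasi-Miura transformation cancel on the degenerate jet. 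You do not prove this cancellation --- you only state it as the remaining task --- yet it is precisely the content of the lemma, whose claim is that the initial data equal $x$ and $0$ \emph{exactly}, with no $\e$-corrections. Routing the argument through $F_g^{\mathbb{P}^1_{l-1,1}}$ is also much harder than necessary, since these functions are not available in closed form for all $g$.

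The paper closes this gap by a different mechanism: instead of the quasi-Miura dictionary it invokes the definition of the tau-function of the extended bigraded Toda hierarchy from~\cite{Carlet}, which expresses the Lax coefficients exactly, to all orders in $\e$, in terms of second logarithmic derivatives of $\tau$ with respect to primary times composed with shifts $x\mapsto x\pm\e$. Because the string-equation computation determines $\partial_{T^{1,0}}\partial_{T^{a,0}}\log Z^{\{1\}}$ on the locus as an explicit linear function of $x$ with no corrections beyond the overall $\e^{-2}$, these shifted second derivatives are known exactly, and \eqref{initial} follows by the bookkeeping of~\cite{BDY-DS}. To repair your proof you would either import this tau-function dictionary (which makes the superpotential/quasi-Miura detour unnecessary), or genuinely prove the vanishing of the dispersive contributions on the jet $\partial_{T^{0,0}}{\bf v}^{\{1\}}=(1,0,\dots,0)$, which at present you have only asserted.
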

 
The above-mentioned Carlet--van de Leur--Posthuma--Shadrin theorem explicitly says that 
with a careful choice of the calibration of Frobenius manifold,
\beq\label{CLPSthm}
Z^{\rm hyperone}|_{x=1}=\mathcal{D}(\bT;\e;l)|_{\bT=\mathbf{T}(1,{\bf s})},
\eeq
where 
\beq
\mathbf{T}(x,{\bf s}):=(T^{l-1,j}=(j+1)! s_{j+1}, 
T^{a,j}=x\delta^{a,0}\delta^{j,0}, a=0,\dots,l-2, \, j\ge0).
\eeq
and
the equality is understood in the sense that it holds up to multiplication by a constant factor that can depend on~$\e$.
By using a scaling symmetry, Lemma~\ref{lemma-initial} and an argument in~\cite{DY1} (cf.~\cite{Du2, Yang, YZ}) we can 
deduce the refined version of the Carlet--van de Leur--Posthuma--Shadrin theorem as follows: under a careful choice of calibration 
of Frobenius manifold, 
\beq\label{refinedCvLPS} 
Z(x,{\bf s};\e;l)=\mathcal{D}(\bT;\e;l)|_{\bT=\mathbf{T}(x,{\bf s})},
\eeq
where
we recall that $Z$ is the corrected partition function.
Identity~\eqref{refinedCvLPS} holds up to multiplying by a pure constant only that is usually irrelevant of the study (this 
is simply because on the right-hand side we do not have a canonical choice of this 
pure constant coming from the genus~1 free energy).
We note that the study of $k$-point functions for $\mathcal{D}(\bT;\e;l)$ 
with all but the extended flows will be postponed to 
another publication.

Using~\eqref{refinedCvLPS} and \eqref{FF10}, \eqref{FF1g}, we obtain the following theorem.  
\begin{theorem}\label{theorem2orbi}
The following formulas hold:
\begin{align}
&\mathcal{F}_0(x,\mathbf{s};l)
=\frac{1}{2}\Bigl(\,\sum_{j_1,j_2\ge0}(j_1+1)!(j_2+1)!s_{j_1+1}s_{j_2+1}\Omega^{\mathbb{P}^1_{l-1,1}}_{l-1,j_1;l-1,j_2}({\bf v}^{\{1\}}(\bT;l))\nn\\
 &+2\sum_{j\ge0}(j+1)!s_{j+1}\bigl(x\Omega^{\mathbb{P}^1_{l-1,1}}_{0,0;l-1,j}({\bf v}^{\{1\}}(\bT;l))
 -\Omega^{\mathbb{P}^1_{l-1,1}}_{1,1;l-1,j}({\bf v}^{\{1\}}(\bT;l))\bigr)\nn\\
 &+x^2
 v^{l-1,\{1\}}(\bT;l)-2x
  \Omega^{\mathbb{P}^1_{l-1,1}}_{0,0;1,1}({\bf v}^{\{1\}}(\bT;l))
 +\Omega^{\mathbb{P}^1_{l-1,1}}_{1,1;1,1}({\bf v}^{\{1\}}(\bT;l))\Bigr)\Big|_{\bT=\mathbf{T}(x,{\bf s})},
\end{align}
\begin{align}
&\mathcal{F}_g(x,\mathbf{s};l)=F_g^{\mathbb{P}^1_{l-1,1}} 
\Bigl({\bf v}^{\{1\}}(\bT;l), \frac{\p {\bf v}^{\{1\}}(\bT;l)}{\p T^{0,0}}, \dots, \frac{\p^{3g-2} {\bf v}^{\{1\}}(\bT;l)}{\p (T^{0,0})^{3g-2}}\Bigr)\Big|_{\bT=\mathbf{T}(x,{\bf s})}, \quad g\ge1.
\end{align}
\end{theorem}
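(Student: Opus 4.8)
The plan is to extract both formulas from the refined Carlet--van de Leur--Posthuma--Shadrin identity \eqref{refinedCvLPS} by matching powers of~$\e$. Writing $\mathcal{F}=\sum_{g\ge0}\e^{2g-2}\mathcal{F}_g$ and $\log\mathcal{D}=\sum_{g\ge0}\e^{2g-2}\F^{\{1\}}_g$, I would take the logarithm of \eqref{refinedCvLPS} and read off the coefficient of $\e^{2g-2}$ to obtain
\[
\mathcal{F}_g(x,\mathbf{s};l)=\F^{\{1\}}_g(\bT;l)\big|_{\bT=\mathbf{T}(x,\mathbf{s})},\qquad g\ge0 .
\]
The pure constant ambiguity in \eqref{refinedCvLPS} only shifts each $\mathcal{F}_g$ by a constant and, as noted after \eqref{refinedCvLPS}, is relevant solely at genus one; it does not affect the $x,\mathbf{s}$-dependence recorded in the statement. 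For $g\ge1$ this already finishes the proof: substituting the jet-variable expression \eqref{FF1g} and setting $\bT=\mathbf{T}(x,\mathbf{s})$ reproduces the second displayed formula verbatim, the derivative $\p/\p T^{0,0}$ being exactly the one kept alive by the specialization $T^{a,j}=x\delta^{a,0}\delta^{j,0}$.

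The substance is the genus-zero identity, which I would obtain by inserting $\bT=\mathbf{T}(x,\mathbf{s})$ into the quadratic expression \eqref{FF10}. Under this specialization the dilaton-shifted times have only three kinds of nonvanishing components: $(j+1)!\,s_{j+1}$ in the slots $(l-1,j)$, the value $x$ in the slot $(0,0)$, and the dilaton value $-1$ in the slot carrying the dilaton shift. Expanding the bilinear form and grouping the six resulting families of terms---the $(l-1)$--$(l-1)$ pairs, the $(0,0)$--$(l-1,j)$ and shifted-slot--$(l-1,j)$ cross terms, and the three pairings among the $(0,0)$ slot and the shifted slot---produces, with the factor $2$ coming from the symmetry $\Omega^{\mathbb{P}^1_{l-1,1}}_{a,j_1;b,j_2}=\Omega^{\mathbb{P}^1_{l-1,1}}_{b,j_2;a,j_1}$, each summand of the claimed formula, except that the pure $x^2$ contribution first appears as $\tfrac12 x^2\,\Omega^{\mathbb{P}^1_{l-1,1}}_{0,0;0,0}(\mathbf{v}^{\{1\}})$ rather than in the stated form.

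To match the statement it then remains to rewrite this last term, which is supplied by the string-type two-point relation of the Frobenius manifold $M(A_{l-1},1)$, namely $\Omega^{\mathbb{P}^1_{l-1,1}}_{0,0;0,0}=v^{l-1}$ (consistent with $\eta_{0,l-1}=1$ read off from the quadratic part of \eqref{string}), which converts it into $\tfrac12 x^2\,v^{l-1,\{1\}}(\bT;l)$. I expect the principal difficulty to be purely bookkeeping: fixing, in the precise normalization of \cite{Y24, DZ-Weyl, Milanov--Tseng}, which flat index carries the dilaton shift, so that the shifted-slot correlators are correctly identified with the $\Omega^{\mathbb{P}^1_{l-1,1}}_{1,1;\,\cdot}$ appearing in the statement, together with the attendant signs and factors of~$2$. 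Once these conventions are pinned down, the genus-by-genus reduction and the $g\ge1$ case are immediate, and the genus-zero collection above yields the first displayed formula.
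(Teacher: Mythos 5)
Your proposal follows exactly the route the paper takes: the paper's entire proof of this theorem is the one-line remark that it follows from the refined identity~\eqref{refinedCvLPS} together with~\eqref{FF10} and~\eqref{FF1g}, which is precisely the substitution $\bT=\mathbf{T}(x,\mathbf{s})$ and $\e$-degree matching that you carry out. Your explicit expansion of the quadratic form into the pairings of the three active slots, the identification $\Omega^{\mathbb{P}^1_{l-1,1}}_{0,0;0,0}=v^{l-1,\{1\}}$ for the $x^2$ term, and your flagged caveat about whether the shift sits in the slot $(1,0)$ or $(1,1)$ (the theorem's $\Omega^{\mathbb{P}^1_{l-1,1}}_{1,1;\,\cdot}$ terms indeed require the standard dilaton shift at descendent level one, so the level recorded in~\eqref{FF10} should be read accordingly) in fact supply more detail than the paper itself records.
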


For the case when $l=2$, Theorem~\ref{theorem2orbi} was given in~\cite[Example 23]{Du2} (cf.~\cite[Theorem 1.1]{Yang}).

The paper is organized as follows. 
In Section~\ref{review}, we do some preparations.
In Section~\ref{conclusion}, we prove Theorems~\ref{cnj1}, \ref{curve}.
In Section~\ref{compute}, 
we do concrete computations based on Theorem~\ref{cnj1}.

\medskip

\noindent {\bf Acknowledgements}
The work was partially supported by NSFC No. 12371254 and the CAS No. YSBR-032.

\section{Initial value and wave functions}\label{review}
In this section, we prove Lemma~\ref{lemma-initial} and then do some further preparations for next sections. 
 
Following~\cite{DYZ2, Y20}, we call 
$\psi=\psi(\lambda,x,\e;l)$
a {\it wave function associated to $L_{\rm ini}$} if 
 \beq\label{wave-eq}
L_{\rm ini}\psi=\lambda\psi,
\eeq
where $L_{\rm ini}$ is the initial Lax operator, that is the 
Lax operator $L$~\eqref{definitionL} being restricted at the initial value. 
An element $\psi_A(\lambda,x,\e;l)$ of the form
\beq
\psi_A(\lambda,x,\e;l)=\lambda^{\frac{x}{(l-1)\e}}\bigl(1+O(\lambda^{-\frac{1}{l-1}})\bigr)
\in\lambda^{\frac{x}{(l-1)\e}}\otimes\CC((\e))[[x]]((\lambda^{-\frac{1}{l-1}}))
\eeq
is called a {\it (formal) wave function of type A associated to $L_{\rm ini}$} if it satisfies~\eqref{wave-eq}.
An element $\psi_B(\lambda,x,\e)$ of the form
\beq
\psi_B(\lambda,x,\e;l)=e^{-s(x,\e;l)}\lambda^{-\frac{x}{\e}}(1+O(\lambda^{-1}))
\in e^{-s(x,\e;l)}\lambda^{-\frac{x}{\e}}\otimes\CC((\e))[[x]]((\lambda^{-1})),
\eeq
with $s(x,\e;l)=-\bigl(1-\TT^{-1})^{-1}(\log u_{-1}(\mathbf{T};\e;l) |_{T^{a,j}=x\delta^{a,0}\delta^{j,0}}\bigr)$, is called a {\it (formal) wave function of type B associated to $L_{\rm ini}$} if it satisfies~\eqref{wave-eq}. 

Now we restricted to the consideration of the special solution of the bigraded Toda hierarchy corresponding to $l$-hypermaps for any fixed $l\ge2$.
\begin{proof}[Proof of Lemma~\ref{lemma-initial}]
Similarly to~\cite{BDY-DS}, one can 
deduce from the string equation~\eqref{string} that 
\beq
\frac{\p^2\log Z^{\{1\}}}{\p T^{1,0}\p T^{a,0}}\bigg|_{T^{b,j}=x\delta^{b,0}\delta^{j,0}}=\frac{x}{\e^2}\delta_{a,l-1},\quad a=0,\dots,l-1.
\eeq
Then by an argument similar to~\cite{BDY-DS} and using the definition of tau-function given in~\cite{Carlet} the lemma is proved.
\end{proof}

In Section~\ref{section1} we recalled a family of algebraic curves from~\cite{DYZ3}. We will now review another 
family of algebraic curves. Originally, the two families of algebraic curves were introduced~\cite{DYZ3} to study certain $r$-spin invariants.
Following~\cite{DYZ3}, define 
\beq
w=1+u-\frac{r-1}6 u^2+\frac{(r-1)(2r+1)}{72}u^3-\cdots
\eeq
as the unique power-series solution to
\beq\label{29}
\frac{w^{r+1}}{r(r+1)}-\frac{w}{r}+\frac{1}{r+1}=\frac{u^2}{2},
\eeq
and define $C_{\ell}(r,j)\in\mathbb{Q}[r,j], \ell\ge0$, by means of generating series as follows:
\beq\label{30}
\sum_{\ell\ge0}C_{\ell}(r,j)u^{\ell+1}=
\left\{
\begin{aligned}
\frac{w^{j+1}-1}{j+1},\quad j\neq-1,\\
\log w,\quad j=-1.
\end{aligned}\right.
\eeq
The first few values are given by
\begin{align}
&C_{0}(r,j)=1,\quad C_1(r,j)=\frac{j}2-\frac{r-1}6,\quad C_{2}(r,j)=\frac{j(j-r)}{6}+\frac{(r-1)(2r+1)}{72},\\
&C_3(r, j) =\frac{ j(j - r)(j - r - 1)}{24}-\frac{(r - 1)(r + 2)(2r + 1)}{540}.
\end{align}

The following two identities are proved in~\cite[Lemma 1]{DYZ3}:
\begin{align}
&f_{r,j+1}(T)=\Bigl(1+\bigl(\frac{r-1}{2}-j\bigr)T+(r+1)T^2\partial_T\Bigr)f_{r,j}(T),\label{psiA-1}\\
&f_{r,j+r}(T)=f_{r,j}(T)-rjTf_{r,j-1}(T),\label{psiA-2}
\end{align}
where $f_{r,j}(T)\in \mathbb{Q}[r,j][[T]]$ are power series of~$T$ defined by
\beq
f_{r,j}(T):=\sum_{\ell\ge0}(2\ell+1)!!C_{2\ell}(r,j)(-T)^\ell.
\eeq

The following lemma generalizes a result in~\cite{Y20}.
\begin{lemma}\label{lemma-psi}
The two elements $\psi_A(\lambda,x,\e;l)$ and $\psi_B(\lambda,x,\e;l)$ defined by 
\begin{align}
\psi_A(\lambda,x,\e;l)&=\lambda^{\frac{x}{(l-1)\e}}f_{l-1,\frac x\e}\Bigl(\frac{\e}{(l-1)\lambda^{\frac{l}{l-1}}}\Bigr),\label{psiA}\\
\psi_B(\lambda,x,\e;l)&=\Gamma(\frac x\e+1)\e^{\frac x\e}\lambda^{-\frac x\e}\sum_{m\ge0}\frac{(\frac x\e+1)_{lm}\,\e^{(l-1)m}}{l^m m!\lambda^{lm}}\label{psiB}
\end{align}
are formal wave functions of type A  
 and of type B associated to $L_{\rm ini}=\TT^{l-1}+x\TT^{-1}$, respectively.
\end{lemma}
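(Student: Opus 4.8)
The plan is to verify directly that each proposed series satisfies the eigenvalue equation \eqref{wave-eq} for $L_{\rm ini}=\TT^{l-1}+x\TT^{-1}$ and that each carries the prescribed leading form. Recalling that $\TT=e^{\e\p_x}$ acts as the shift $(\TT^k\phi)(x)=\phi(x+k\e)$ and that coefficients multiply from the left, the equation $L_{\rm ini}\psi=\lambda\psi$ reads explicitly
\[
\psi(x+(l-1)\e)+x\,\psi(x-\e)=\lambda\,\psi(x).
\]
Thus the entire verification amounts to computing the two shifts $x\mapsto x+(l-1)\e$ and $x\mapsto x-\e$ of each ansatz and collecting terms.

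For $\psi_A$ I would set $j:=x/\e$ and $T:=\e/\bigl((l-1)\lambda^{l/(l-1)}\bigr)$, so that $\psi_A=\lambda^{j/(l-1)}f_{l-1,j}(T)$ with $T$ independent of $x$. Under $x\mapsto x+(l-1)\e$ the index becomes $j+(l-1)$ and the prefactor gains a factor $\lambda$, while under $x\mapsto x-\e$ the index becomes $j-1$ and the prefactor gains $\lambda^{-1/(l-1)}$. After dividing the eigenvalue equation by $\lambda^{j/(l-1)}$ it reads
\[
\lambda\,f_{l-1,j+(l-1)}(T)+x\,\lambda^{-1/(l-1)}f_{l-1,j-1}(T)=\lambda\,f_{l-1,j}(T).
\]
Using $x=j\e$ together with the defining relation $\e=(l-1)T\lambda^{l/(l-1)}$ simplifies $x\,\lambda^{-1/(l-1)}=(l-1)\,j\,T\,\lambda$; dividing through by $\lambda$ then converts the identity into exactly
\[
f_{l-1,j+(l-1)}(T)=f_{l-1,j}(T)-(l-1)\,j\,T\,f_{l-1,j-1}(T),
\]
which is the recurrence \eqref{psiA-2} (with $r=l-1$) from \cite[Lemma~1]{DYZ3}. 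Since $C_0(l-1,j)=1$ gives $f_{l-1,j}(0)=1$ and $T=O(\lambda^{-l/(l-1)})$ is an integer power of $\lambda^{-1/(l-1)}$, the series $\psi_A$ automatically has the required type-A form, completing this case.

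For $\psi_B$ I again set $j:=x/\e$ and write $\psi_B=\Gamma(j+1)\,\e^{j}\lambda^{-j}\sum_{m\ge0}\frac{(j+1)_{lm}}{m!}\,q^{m}$ with $q:=\e^{l-1}/(l\lambda^{l})$. The two shifts change only the Gamma-prefactor, by the elementary factors $\Gamma(j+l)/\Gamma(j+1)=(j+1)_{l-1}$ and $\Gamma(j)/\Gamma(j+1)=1/j$, together with the matching powers of $\e$ and $\lambda$. Dividing the eigenvalue equation by $\Gamma(j+1)\,\e^{j}\lambda^{-j}$ and using $\e^{l-1}\lambda^{-(l-1)}=ql\lambda$, everything collapses to a single identity of power series in $q$; after reindexing $m\mapsto m-1$ in the series coming from $\psi_B(x+(l-1)\e)$, comparison of the coefficient of $q^{m}/m!$ reduces (for $m\ge1$) to the Pochhammer identity $(j+1)_{l-1}\,(j+l)_{lm-l}=(j+1)_{lm-1}$, equivalently $(j+1)_{lm}-(j)_{lm}=lm\,(j+1)_{lm-1}$, while $m=0$ gives $1=1$. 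To confirm the genuine type-B form I would check that the prefactor equals $e^{-s(x,\e;l)}$: by Lemma~\ref{lemma-initial} one has $u_{-1}=x$ at the initial value, so $s=-(1-\TT^{-1})^{-1}\log x$, and a one-line computation gives $(1-\TT^{-1})\bigl(\log\Gamma(j+1)+j\log\e\bigr)=\log x$, whence $e^{-s}=\Gamma(j+1)\,\e^{j}$ as required.

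The routine bookkeeping aside, the \emph{hard part} will be keeping the fractional powers of $\lambda$ consistent throughout the type-A computation, so that all shifts are manipulated within the ring $\CC((\e))[[x]]((\lambda^{-1/(l-1)}))$, and recognizing that the single substitution $x\,\lambda^{-1/(l-1)}=(l-1)\,j\,T\,\lambda$ is precisely what turns the eigenvalue equation into the known recurrence \eqref{psiA-2}. Once this reduction is spotted, both cases become immediate, the type-B case being a purely elementary Pochhammer manipulation.
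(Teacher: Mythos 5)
Your proof is correct and follows essentially the same route as the paper's: the type-A case is reduced, after tracking the shifts of the prefactor $\lambda^{x/((l-1)\e)}$ and substituting $x\lambda^{-1/(l-1)}=(l-1)\tfrac{x}{\e}T\lambda$, to the recurrence~\eqref{psiA-2} with $r=l-1$, and the type-B case is a direct verification together with the identification $s(x,\e;l)=-\log\Gamma(\tfrac x\e+1)-\tfrac x\e\log\e$ coming from $u_{-1}=x$. The paper states only these two reductions without detail, so your write-up simply supplies the bookkeeping (the Pochhammer identity $(j+1)_{lm}-(j)_{lm}=lm\,(j+1)_{lm-1}$ and the leading-order checks) that the authors leave implicit.
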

\begin{proof}
For $\psi_A(\lambda,x,\e;l)$, the statement follows from~\eqref{psiA-2}. 
Noting that in our situation, 
\beq
s(x,\e;l)=-\log\Gamma(\frac x\e+1)-\frac x\e\log\e,
\eeq
the statement for $\psi_B(\lambda,x,\e;l)$ follows by a direct verification.
\end{proof}

We note that $\psi_B$ is related to the quantum spectral curve for $l$-hypermaps
\cite{DM, GS, MulaseS} (cf.~\cite{DYZ3}), and a result essentially equivalent to formula~\eqref{psiB} was obtained in~\cite[Proposition~10]{DM}.

The following relations for $\widetilde C_s(r,i,j)$ (see~\eqref{deftc1}), which are proved in~\cite{DYZ3}, 
\begin{align}
& \widetilde C_s(r,i+1,j) - \widetilde C_s(r,i,j+1) = 2  \widetilde C_{s-1}(r,i,j) ,  \label{tcid1} \\
& \widetilde C_s(r,i+r+1,j)-\widetilde C_s(r,i,j+r+1) = 2(r+1) \widetilde C_{s-1}(r,i,j),  \label{tcid2} \\
& f_{r,i}(T)f_{r,j}(-T) = \sum_{s\geq0}  \bigl(1+\tfrac{s-i-j-1}r\bigr)_s \widetilde C_s(r,i,j)\Bigl(\frac{rT}2\Bigr)^s\label{fifjTmT}
\end{align}
will also be useful.

\section{Proof of Theorems~\ref{cnj1} and~\ref{curve}}\label{conclusion}

In this section we prove Theorems~\ref{cnj1} and~\ref{curve}.

Define~\cite{FYZ} the {\it matrix Lax operator} $\mathcal{L}$ by
\beq
\mathcal{L}:=\TT+\Lambda(\lambda)+V,
\eeq
where 
$\Lambda(\lambda)=-\lambda e_{1,l-1}-\sum_{i=2}^{l}e_{i,i-1}$
 and 
$ V=\sum_{j=1}^{l}u_{l-1-j}e_{1,j}$.
Here,  
 $e_{i,j}$ denotes the $l\times l$ matrix with the $(i,j)$-entry being~$1$ and others $0$.
Let $\mathcal{L}_{\rm ini}:=\mathcal{L}|_{T^{a,j}=\delta^{a,0}\delta^{j,0}x}$ with the initial data given by~\eqref{initial}. 
We have $\mathcal{L}_{\rm ini}=\TT+\Lambda(\lambda)+V(x)$, where 
$V(x)=xe_{1,l}$.

Define a matrix $\Psi(\lambda,x,\e;l)=(\Psi(\lambda,x,\e;l)_{i,j})_{i,j=1,\dots,l}$ by
\beq\label{Psi}
\Psi(\lambda,x,\e;l)_{i,j}:=\left\{
\begin{array}{ll}
\psi_A(e^{2\pi\sqrt{-1}(j-1)}\lambda,x+(l-1-i)\e,\e;l),& j=1,\dots,l-1,\\
\psi_B(\lambda,x+(l-1-i)\e,\e;l),& j=l.
\end{array}
\right.
\eeq
By using Lemma~\ref{lemma-psi} we arrive at
\begin{prop}\label{bispPsi}
The following equations hold:
\begin{align}
&\mathcal{L}_{\rm ini}\Psi(\lambda,x,\e;l)=0.\label{Translation-Psi}
\end{align}
\end{prop}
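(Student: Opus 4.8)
The plan is to reduce the matrix identity \eqref{Translation-Psi} to the scalar spectral problem $L_{\rm ini}\psi=\lambda\psi$ that is already solved in Lemma~\ref{lemma-psi}. First I would write $\mathcal{L}_{\rm ini}=\TT+\Lambda(\lambda)+V(x)$ out as an $l\times l$ matrix whose entries are scalar difference operators in $\TT=e^{\e\p_x}$: every diagonal entry carries $\TT$, each subdiagonal entry $(i,i-1)$ for $i=2,\dots,l$ carries $-1$, the $(1,l-1)$ entry carries $-\lambda$, and the $(1,l)$ entry carries $x$. Since $\mathcal{L}_{\rm ini}$ acts column-by-column on $\Psi$ and the columns are independent, it suffices to fix one column index $k$ and write $\phi_k$ for the underlying scalar wave function of that column --- namely $\psi_A$ on the appropriate branch when $k\le l-1$, and $\psi_B$ when $k=l$ --- so that, by \eqref{Psi}, $\Psi_{i,k}=\phi_k\big(\lambda,x+(l-1-i)\e,\e;l\big)$.

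The second step is the bookkeeping of the argument shifts. For rows $i=2,\dots,l$ the equation $(\mathcal{L}_{\rm ini}\Psi)_{i,k}=0$ reads $\TT\,\Psi_{i,k}=\Psi_{i-1,k}$; evaluating both sides gives the common value $\phi_k\big(x+(l-i)\e\big)$, so these $l-1$ relations hold identically, precisely because of the choice of the shift $(l-1-i)\e$ in \eqref{Psi}. This is the mechanism by which a single scalar function is spread across a column to solve a first-order matrix system. For the top row $i=1$ the three surviving terms are $\TT\Psi_{1,k}=(\TT^{l-1}\phi_k)(x)$, $\Psi_{l-1,k}=\phi_k(x)$, and $\Psi_{l,k}=(\TT^{-1}\phi_k)(x)$, so the row-$1$ equation collapses to
\[
(\TT^{l-1}\phi_k)(x)+x\,(\TT^{-1}\phi_k)(x)=\lambda\,\phi_k(x),
\]
which is exactly $L_{\rm ini}\phi_k=\lambda\phi_k$ with $L_{\rm ini}=\TT^{l-1}+x\TT^{-1}$.

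It then remains to invoke Lemma~\ref{lemma-psi}, which asserts that both $\psi_A$ and $\psi_B$ are wave functions of $L_{\rm ini}$. For the last column this is immediate, since $\phi_l=\psi_B$. For columns $k=1,\dots,l-1$ the relevant function is $\psi_A(e^{2\pi\sqrt{-1}(k-1)}\lambda,\cdot)$, and the only point requiring care is the reading of its spectral argument: because $\psi_A$ is a series in the fractional power $\lambda^{-1/(l-1)}$, substituting $e^{2\pi\sqrt{-1}(k-1)}\lambda$ for $\lambda$ amounts to multiplying $\lambda^{1/(l-1)}$ by the $(l-1)$-th root of unity $e^{2\pi\sqrt{-1}(k-1)/(l-1)}$, and letting $k$ run from $1$ to $l-1$ produces the $l-1$ distinct type-A solutions. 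Since the wave equation $L_{\rm ini}\psi_A=\lambda\psi_A$ is a polynomial identity in $\lambda^{1/(l-1)}$ while the eigenvalue $\lambda=(\lambda^{1/(l-1)})^{l-1}$ is invariant under such a root-of-unity rescaling, every one of these branches still satisfies $L_{\rm ini}\phi_k=\lambda\phi_k$ with the same $\lambda$ appearing in $\Lambda(\lambda)$. Verifying this branch statement, rather than the essentially automatic shift algebra of the second step, is the only genuine obstacle; once it is settled the top-row equation holds for every column and \eqref{Translation-Psi} follows. I would also remark that the computation is uniform in $l$, the case $l=2$ merely merging the $-\lambda$ term into the $(1,1)$ diagonal entry without changing the resulting scalar relation.
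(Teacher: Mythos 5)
Your proposal is correct and follows the same route the paper intends: the paper states Proposition~\ref{bispPsi} as a direct consequence of Lemma~\ref{lemma-psi}, and your column-by-column unpacking of $\mathcal{L}_{\rm ini}=\TT+\Lambda(\lambda)+V(x)$ (the shift relations in rows $2,\dots,l$ plus the top-row reduction to $L_{\rm ini}\phi=\lambda\phi$, including the root-of-unity branch check for the type-A columns) is exactly the verification being left implicit there.
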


We also introduce two functions $\psi^*_A(\lambda,x,\e;l), \psi^*_B(\lambda,x,\e;l)$ by
\begin{align}
\psi^*_A(\lambda,x,\e;l)&=\frac1{l-1}\lambda^{-\frac{x}{(l-1)\e}-1}
f_{l-1,-\frac x\e-1}\Bigl(\frac{-\e}{(l-1)\lambda^{\frac{l}{l-1}}}\Bigr),\\
\psi^*_B(\lambda,x,\e;l)&=-\frac{\e^{-\frac{x}{\e}}}{\Gamma(\frac x\e+1)}\lambda^{\frac x\e-1}\sum_{m\ge0}\frac{(-1)^m(\frac x\e+1-lm)_{lm}\,\e^{(l-1)m}}{l^m m!\lambda^{lm}},
\end{align}
and a matrix $\Psi^*(\lambda,x,\e;l)=(\Psi^*(\lambda,x,\e;l)_{i,j})_{i,j=1,\dots,l}$ by
\beq
\Psi^*(\lambda,x,\e;l)_{i,j}=\left\{
\begin{array}{lll}
\psi^*_A(e^{2\pi \sqrt{-1}(i-1)}\lambda, x-j\e,\e;l),&i=1,\dots,l-1,&j=1,\dots,l-1,\\
- x\psi^*_A(e^{2\pi \sqrt{-1}(i-1)}\lambda, x,\e;l),&i=1,\dots,l-1,&j=l,\\
\psi^*_B(\lambda, x-j\e,\e;l),&i=l,&j=1,\dots,l-1,\\
-x\psi^*_B(\lambda, x,\e;l),&i=l,&j=l.\\
\end{array}
\right.
\eeq
We note that the functions $\psi^*_{A}, \psi^*_B$ both satisfy
\beq
(\TT^{-(l-1)}+(x+\e)\TT)\psi^*=\lambda\psi^*.
\eeq
Here we point out that $\TT^{-(l-1)}+(x+\e)\TT$ is the adjoint operator of $L_{ini}$ (cf.~\cite{Dickey,DYZ2,DYZ3}).
\begin{prop}
The following identity holds:
\beq
\Psi(\lambda,x,\e;l)\Psi^*(\lambda,x,\e;l)\equiv I_l.
\eeq
\end{prop}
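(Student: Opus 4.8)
The plan is to prove the equivalent statement $\Psi^*\Psi\equiv I_l$ and then deduce $\Psi\Psi^*\equiv I_l$ from it: the entries of $\Psi$ lie in a commutative ring of formal series (the transcendental prefactors $\lambda^{\pm x/((l-1)\e)}$, $e^{\pm s}$, $\Gamma(\tfrac x\e+1)^{\pm1}$ being regarded as formal units), and the leading asymptotics of $\Psi$ as $\lambda\to\infty$ form an invertible matrix, so $\det\Psi$ is a unit and a one-sided inverse is automatically two-sided. The core of the argument is then to show that $\Psi^*$ obeys the dual (right) difference equation
$$
\Psi^*(\lambda,x+\e,\e;l)\,U(\lambda,x)=\Psi^*(\lambda,x,\e;l),\qquad U(\lambda,x):=-\bigl(\Lambda(\lambda)+V(x)\bigr),
$$
where $U$ is precisely the matrix for which Proposition~\ref{bispPsi}, i.e.~\eqref{Translation-Psi}, reads $\Psi(\lambda,x+\e,\e;l)=U(\lambda,x)\Psi(\lambda,x,\e;l)$. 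This is the matrix incarnation of the fact, recorded just before the statement, that $\psi^*_A$ and $\psi^*_B$ solve the adjoint equation $(\TT^{-(l-1)}+(x+\e)\TT)\psi^*=\lambda\psi^*$ of $L_{\rm ini}$. Concretely I would feed the explicit series \eqref{psiB} for $\psi^*_B$ and the recursions \eqref{psiA-1}--\eqref{psiA-2} for $\psi^*_A$ into the row-vector recursion encoded by $U$, tracking the argument shifts $x\mapsto x-j\e$, the conjugate twists $\lambda\mapsto e^{2\pi\sqrt{-1}(i-1)}\lambda$, and the asymmetric $-x$ normalization in the last column of $\Psi^*$.

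Granting the dual equation, the product $Q:=\Psi^*\Psi$ is $\TT$-invariant:
$$
\TT(Q)=\Psi^*(x+\e)\Psi(x+\e)=\Psi^*(x+\e)\,U(\lambda,x)\,\Psi(x)=\Psi^*(x)\Psi(x)=Q.
$$
In forming this product the transcendental prefactors cancel pairwise ($\lambda^{x/((l-1)\e)}$ against $\lambda^{-x/((l-1)\e)}$ in the $A$-sheets, and $\lambda^{-x/\e}$, $e^{-s}\Gamma$ against $\lambda^{x/\e}$, $e^{s}\Gamma^{-1}$ in the $B$-slot), so each entry of $Q$ is an honest element of $\CC((\e))[[x]]((\lambda^{-1/(l-1)}))$. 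A formal power series in $x$ fixed by $x\mapsto x+\e$ must be constant in $x$; hence $Q$ is independent of $x$.

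It remains to identify the $x$-independent matrix $Q$ with $I_l$, which I would do order by order in $\lambda^{-1/(l-1)}$. The $(a,c)$-entry is $\sum_{b=1}^{l-1}\psi_A\psi^*_A\big|_{\text{sheet }b}+\psi_B\psi^*_B$; because $\psi^*_A$ carries the argument $-\e/((l-1)\lambda^{l/(l-1)})$ opposite to that of $\psi_A$, each $A$-summand is exactly a product $f_{l-1,\,\cdot}(T)\,f_{l-1,\,\cdot}(-T)$ and is evaluated by the bilinear identity \eqref{fifjTmT}, while the $B$-term is summed directly via \eqref{tCini} and the Pochhammer manipulations behind \eqref{psiB}. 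The sum over the $l-1$ sheets is a geometric sum of roots of unity acting as a discrete projector, which annihilates all powers except those matching the diagonal and collapses the two series into $\delta_{a,c}$. Since $Q$ is already known to be $x$-independent, it suffices to perform this collapse at the lowest order in $x$ (e.g.~the $x^0$-coefficient, where the $\Gamma$-factors trivialize), pinning every entry to $\delta_{a,c}$ and giving $Q=I_l$.

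The main obstacle is the verification of the dual difference equation in the first step: unlike $\Psi$, whose equation is supplied by Proposition~\ref{bispPsi}, the matrix $\Psi^*$ is assembled from the adjoint wave functions with their own pattern of shifts, conjugate twists, and the $-x$ entries in its last column, and one must check that these conspire so that right-multiplication by $U(\lambda,x)$ exactly reproduces the $\e$-shift. A secondary difficulty is the bookkeeping of the discrete Fourier sum over sheets so that it selects precisely the Kronecker delta; this is where the bilinear identity \eqref{fifjTmT} and the initial values \eqref{tCini} carry the computation.
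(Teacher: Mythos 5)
Your route — a dual difference equation for $\Psi^*$, hence $\TT$-invariance and $x$-independence of the product, then evaluation at a single point — is genuinely different from the paper's, which computes the entries of $\Psi\Psi^*$ directly, uses \eqref{fifjTmT} to convert them into series in the $\widetilde C_s$, and reduces the whole proposition to the scalar identity \eqref{tCfin}, proved by a base case at $y=0$, $m=0$ together with inductions via \eqref{tcid1} and \eqref{tcid2}. But as written your argument has two concrete gaps. First, you conflate the two products. The $\TT$-invariance argument applies to $Q=\Psi^*\Psi$, whose $(a,c)$-entry is $\sum_b\Psi^*_{a,b}\Psi_{b,c}$: there $\psi^*_A$ sits on sheet $a$, $\psi_A$ on sheet $c$, the summation index runs over the $x$-shifts, and for $a\neq c$ the transcendental prefactors do \emph{not} cancel pairwise — the entry carries a factor $e^{2\pi\sqrt{-1}(c-a)x/((l-1)\e)}$, and the mixed $A$/$B$ entries carry powers $\lambda^{\pm lx/((l-1)\e)}$, so these entries are not elements of $\CC((\e))[[x]]((\lambda^{-1/(l-1)}))$ and your "a series fixed by $x\mapsto x+\e$ is constant" step does not apply to them as stated. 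The same-sheet products $f_{l-1,\cdot}(T)f_{l-1,\cdot}(-T)$, the pairwise prefactor cancellation, and the roots-of-unity projector you invoke are all features of the \emph{other} product $\Psi\Psi^*$, for which $\TT$ acts by conjugation $Q\mapsto UQU^{-1}$ rather than by invariance. (The mismatch is repairable — for $\Psi^*\Psi$ the surviving exponential prefactors combined with $\TT$-invariance in fact force the off-diagonal entries to vanish by comparing lowest orders in $\e$ or in $\lambda$ — but that is a different argument from the one you sketch and must be supplied.)

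Second, even granting $x$-independence, identifying the diagonal entries with $1$ is not mere bookkeeping "via \eqref{tCini} and Pochhammer manipulations": after applying \eqref{fifjTmT} one is left with exactly the kind of identity among the $\widetilde C_s$ that the paper isolates as \eqref{tCpre}/\eqref{tCfin}, and whose proof occupies the entire argument there — a nontrivial base-case computation plus the two recursions \eqref{tcid1} and \eqref{tcid2}. Fixing $x$ only specializes the parameter $y$ in \eqref{tCfin}; it does not remove the sum over $m$ (which indexes powers of $\lambda^{-l}$ coming from the $B$-column), so the induction via \eqref{tcid2} is still unavoidable. In short, the $x$-independence idea can at best replace the $y$-induction step of the paper's proof; the mathematical core of the proposition is still missing from your outline, and the step you flag as the "main obstacle" (the dual difference equation) is actually the easy part, since it follows formally from the adjoint equation $(\TT^{-(l-1)}+(x+\e)\TT)\psi^*=\lambda\psi^*$ recorded just before the statement.
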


\begin{proof}
By using the definitions of $\Psi(\lambda,x,\e;l)$, $\Psi^*(\lambda,x,\e;l)$ and by using~\eqref{fifjTmT}, we find that 
the proposition is equivalent to the validity of the following identity: 
\begin{align}
&\widetilde C_{j-i+(l-1)(m+1)}(l-1,n+l-1-i,-n+j-1)\nn\\
& \equiv \frac{2^{j-i+(l-1)(m+1)}}{l^{m}}
\sum_{t=0}^{m}(-1)^{m-t}\binom{m}{t}\binom{n+l(t+1)-i-1}{l(m+1)+j-i-1}\label{tCpre}
\end{align}
for all $n\in\CC$ and for all $i=1,\dots,l$, 
$j=0,\dots,l-1$, 
$m\in\mathbb{Z}$ satisfying $j-i+(l-1)(m+1)\ge0$.
To prove~\eqref{tCpre} let us prove the following stronger one:
\beq\label{tCfin}
\widetilde C_{s}(l-1,y,-y+s-(l-1)m-1)
\equiv\frac{2^{s}}{l^{m}}
\sum_{t=0}^{m}(-1)^{m-t}\binom{m}{t}\binom{y+lt}{s+m}
\eeq
for all $y\in\CC$ and for all $s, m\ge0$. 
By a straightforward calculation one can show the validity of~\eqref{tCfin} 
for the case when $s\ge0, m=0$ and $y=0$. 
Then by using~\eqref{tcid1} and the known polynomiality in~$y$ on both sides we can prove~\eqref{tCfin} 
for $s\ge0, m=0$ and $y\in\CC$. 
Further using~\eqref{tcid2} we get the validity for $s,m\ge0$ and $y\in\CC$.
The proposition is proved.
\end{proof}

Let $R_a(\lambda)$, $a=1,\dots, l-1$, be 
the basic matrix resolvents of $\mathcal{L}$~\cite{FYZ, HY}. 
Denote by $\mathcal{R}_a(\lambda,x,\e;l)$ these $R_a(\lambda)$ evaluated at the initial value~\eqref{initial}. 

Introduce a diagonal matrix
\beq
P_a(l):=\diag(1,\xi_{l-1}^a,\dots,\xi_{l-1}^{(l-2)a},0),\quad a=1,\dots,l-1,
\eeq
where $\xi_{l-1}:=e^{\frac{2\pi\sqrt{-1}}{l-1}}$.
Similar to~\cite{BDY1, DYZ1, HY} let us prove the following lemma. 
\begin{lemma} \label{bispRM}
For $a=1,\dots,l-1$, we have
\beq\label{MR-TE}
\mathcal{R}_a(\lambda,x,\e;l) =  
\lambda^{\frac{a}{l-1}} 
\Psi(\lambda,x,\e;l)
 P_a(l)
\Psi(\lambda,x,\e;l)^{-1}.
\eeq
\end{lemma}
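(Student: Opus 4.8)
The plan is to show that the right-hand side of \eqref{MR-TE}, call it
$\widetilde{\mathcal R}_a(\lambda,x,\e;l) := \lambda^{\frac{a}{l-1}}\Psi\, P_a(l)\,\Psi^{-1}$,
satisfies the same defining properties as the basic matrix resolvent $\mathcal R_a(\lambda,x,\e;l)$, and then invoke the uniqueness of the resolvent. First I would recall from \cite{FYZ, HY} that the basic matrix resolvents $R_a(\lambda)$ of $\mathcal L$ are characterized by: (i) they commute with $\mathcal L$ in the appropriate sense, i.e.\ satisfy a discrete ``translation'' or Lax-type compatibility $\mathcal L_{\rm ini}\,\widetilde{\mathcal R}_a = \widetilde{\mathcal R}_a\,\mathcal L_{\rm ini}$ (or, more precisely, the matrix-resolvent equation $[\mathcal L_{\rm ini}, \widetilde{\mathcal R}_a]=0$ interpreted with the shift operator $\TT$); (ii) they have a prescribed asymptotic normalization as $\lambda\to\infty$ fixing the leading behaviour and the trace; and (iii) they are $\TT$-periodic/difference-polynomial in the coefficients. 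The idea is to verify each of these three for $\widetilde{\mathcal R}_a$ directly from the factorized form.

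The main computational step is the commutation. Here I would use Proposition \ref{bispPsi}, which gives $\mathcal L_{\rm ini}\Psi(\lambda,x,\e;l)=0$ (more precisely the eigenvalue relation built into $\Psi$), together with the fact that $P_a(l)$ is the constant diagonal projector-type matrix built from the roots of unity $\xi_{l-1}^{a}$. Because $\Psi$ diagonalizes the constant-coefficient operator $\Lambda(\lambda)$ up to the $\lambda^{a/(l-1)}$ scalings, conjugating $P_a(l)$ by $\Psi$ produces a matrix that commutes with $\mathcal L_{\rm ini}$: the key algebraic fact is that $\Lambda(\lambda)$ acts on the $j$th column of $\Psi$ (the one built from $\psi_A(e^{2\pi\sqrt{-1}(j-1)}\lambda,\cdots)$) by multiplication by the eigenvalue $\lambda^{1/(l-1)}\xi_{l-1}^{j-1}$, so that $\Lambda(\lambda)\Psi = \Psi\,\lambda^{1/(l-1)}D$ for the diagonal matrix $D=\diag(1,\xi_{l-1},\dots,\xi_{l-1}^{l-2},*)$, and $P_a(l)$ is chosen precisely to pick out the $a$th power $D^a$ on the first $l-1$ slots while killing the $l$th (type-B) slot via the last $0$ entry. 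Thus $\widetilde{\mathcal R}_a$ is, up to the scalar $\lambda^{a/(l-1)}$, the $\Psi$-conjugate of a power of the diagonalizing data, which manifestly commutes with $\mathcal L_{\rm ini}$ by construction.

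After the commutation, I would pin down the normalization: expanding $\Psi$ and $\Psi^*=\Psi^{-1}$ (the previous proposition gives $\Psi\Psi^*\equiv I_l$) in powers of $\lambda^{-1/(l-1)}$ using the explicit type-A series $f_{l-1,x/\e}$ and the type-B series \eqref{psiB}, I would check that the leading asymptotics of $\widetilde{\mathcal R}_a$ as $\lambda\to\infty$ matches the standard normalization of $R_a$ (in particular the $\Tr$ and the leading power $\lambda^{a/(l-1)}$, together with the requirement that $P_a(l)$ has trace $\sum_{b=0}^{l-2}\xi_{l-1}^{ab}=0$ for $1\le a\le l-1$, which fixes the trace-free part correctly). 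The hard part will be establishing that this uniqueness characterization genuinely applies here, i.e.\ that the three properties above determine $R_a$ uniquely and that the factorized expression lands in the correct ring (difference polynomials in the $u_i$, with the right power of $\lambda$): this is the step where I would lean most heavily on the analogous arguments in \cite{BDY1, DYZ1, HY}, adapting the normalization bookkeeping to the bigraded/$l$-dimensional setting and to the particular $\lambda^{a/(l-1)}$ fractional powers. Once uniqueness is in hand, matching the asymptotics and the commutation forces $\widetilde{\mathcal R}_a=\mathcal R_a$, completing the proof.
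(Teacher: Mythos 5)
Your proposal follows essentially the same route as the paper: the paper's (very terse) proof is precisely the combination of $\mathcal{L}_{\rm ini}\Psi=0$ from Proposition~\ref{bispPsi}, the explicit form \eqref{Psi} of $\Psi$, and the uniqueness characterization of the basic matrix resolvents imported from \cite[Lemma~7]{HY} --- i.e.\ exactly the commutation-plus-normalization-plus-uniqueness argument you outline. One small slip to fix in your normalization bookkeeping: $\Tr\, P_a(l)=\sum_{b=0}^{l-2}\xi_{l-1}^{ab}$ vanishes only when $(l-1)\nmid a$, so for $a=l-1$ it equals $l-1$ rather than $0$.
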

\begin{proof}
The lemma is proved by using~\eqref{Translation-Psi}, \eqref{Psi} and \cite[Lemma~7]{HY}.
\end{proof}

\begin{proof}[Proof of Theorems \ref{cnj1} and~\ref{curve}]
Taking $\bT=\mathbf{0}$ in~\cite[Proposition 1.6]{FYZ} and using Lemma~\ref{bispRM} with 
 the notice that $\mathcal{R}_{m_1}(\lambda,n,1;l)=\lambda M(\lambda,n;l)$, 
and using the Carlet--van de Leur--Posthuma--Shadrin theorem~\eqref{CLPSthm}, we get~\eqref{npoint},
 \eqref{Mm1}, \eqref{gm1} and~\eqref{ynew}
(it is proved in~\cite{HY} that the tau-structure from~\cite{FYZ} is the same as the one in~\cite{Carlet}).
Similar to the method in~\cite{DY1}, we 
get~\eqref{1point}, \eqref{1point2}.
\end{proof}

From~\eqref{tCfin} we can prove directly the 
equivalence between~\eqref{1point} and~\eqref{1point2}.

Let us now give direct proofs of equivalence between~\eqref{1point2} and Zagier's formula~\eqref{Zagierformula2} as well as of equivalence between~\eqref{1point2} and Zagier's formula~\eqref{Zagierformula}.
Formula~\eqref{1point2} tells that for $n\in\CC$, 
\begin{align}
\sum_{g\ge0}lmM^{[l]}_{g,1}(lm)n^{1-2g+(l-1)m}
&=-\frac{(ml)!}{2^{(l-1)m} m!}{\rm res}_{u=0}(X+1)^n(X-1)^{-n}\frac1{u^{(l-1)m}}\frac{dX}{du}du\nn\\
&=-\frac{(ml)!}{l^m m!}{\rm res}_{Y=1}Y^{-n}\frac{(1-Y^l)^m}{(1-Y)^{ml+2}}dY,\label{midZagierformula}
\end{align}
where in the second equality we performed the change of variable 
$X=(1+Y)/(1-Y)$.
Taking coefficients of powers of~$n$ in~\eqref{midZagierformula}
we obtain
\begin{align}
lmM^{[l]}_{g,1}(lm)=-\frac{(lm)!}{l^m m!}{\rm res}_{Y=1}\frac{(-\log Y)^{1-2g+(l-1)m}}{(1-2g+(l-1)m)!}\frac{(1-Y^l)^m}{(1-Y)^{lm+2}}
dY\nn\\
=\frac{(lm)!}{l^m m!(1-2g+(l-1)m)!}{\rm res}_{t=0}t^{1-2g+(l-1)m}\frac{(1-e^{-lt})^m}{(1-{e^{-t}})^{lm+2}}e^{-t}
dt,
\end{align}
which gives Zagier's formula~\eqref{Zagierformula2}.
For $n\in\mathbb{Z}_{\ge0}$, using the residue theorem based on~\eqref{midZagierformula}, we get
\begin{align}
\sum_{g\ge0}lmM^{[l]}_{g,1}(lm)n^{1-2g+(l-1)m}
&=\frac{(ml)!}{l^m m!}{\rm res}_{Y=0}Y^{-n}\frac{(1-Y^l)^m}{(1-Y)^{ml+2}}dY,
\end{align}
which gives Zagier's formula~\eqref{Zagierformula}.

\section{Computations}\label{compute}
In this section, we do some computations based on~\eqref{1point}, \eqref{npoint}.

It will be convenient to introduce a function 
$f(i,j,p;l)$ of $i,j\in \CC$, $p\in \ZZ$, by
\beq\label{def_f}
f(i,j,p;l) :=
\frac1{l^{p}p!}
\sum_{s=0}^{p}
(-1)^{s}
\binom{p}{s}
(i+1-ls)_{j-1}.
\eeq

Consider $k=1$. Recall by definition that the number $M^{[l]}_{g,1}(a+1)$ vanishes if $l {\not|}\, a+1$. From~\eqref{1point} we obtain that when 
$a+1$ is a multiple of $l$, 
\beq
(a+1)M^{[l]}_1(a+1;n)=
\frac1{a+2} \, f\Bigl(n-1,a+3,\frac{a+1}{l};l\Bigr). \label{onepointexplicit}
\eeq
Here, $a\ge0$.
For the special case when $l=2$, the function $f(i,j,p;l)$ for $j$ being a positive integer can be alternatively written as 
\beq\label{equivlence}
f(i,j,p;2)=\frac{(j-1)!}{2^pp!}\sum_{s=0}^p2^{s}\binom{p}{s}
\binom{i+j-2p-1}{s+j-2p-1},
\eeq
from which one easily sees the equivalence between~\eqref{onepointexplicit} and \cite[(3.2.4)]{DY1}.

Taking $g=0$ in Zagier's formula~\eqref{Zagierformula2}, we obtain
\beq\label{genus0M}
lmM^{[l]}_{0,1}(lm)=\frac1{lm+1} \binom{lm+1}{m}.
\eeq
Formula~\eqref{genus0M} was obtained in~\cite[Proposition 13 and Remark 14]{DM}.
Taking $g=1,2,3$ in Zagier's formula~\eqref{Zagierformula2}, we obtain
\begin{align}
&lmM^{[l]}_{1,1}(lm)=lm\binom{lm-1}{m}\frac{(l-1)lm-2}{24},\\
&lmM^{[l]}_{2,1}(lm)=(lm-2)_3\binom{lm-3}{m}\frac{5(l-1)^2l^2m^2-(2l^4+20l^2-22l)m+24}{5760},\\
&lmM^{[l]}_{3,1}(lm)=(lm-4)_5\binom{lm-5}{m}\nn\\
&\times \frac{35 (l-1)^3 l^3 m^3-42 (l-1)^2 l^2 (l^2+l+6) m^2+(16 l^6+84 l^4+504 l^2-604 l) m-480}{2903040}.
\end{align}

When $l$ is odd, comparing coefficients of~$n^1$ on both sides of~\eqref{1point}, we obtain
\beq
lmM^{[l]}_{\frac{(l-1)m}{2},1}(lm)=\frac{(lm)!}{l^m m! (lm+1)}
\sum_{s=0}^m\binom{m}{s}\Big/\binom{lm}{ls}.
\eeq
When $l$ is even, assuming $m$ is even, we obtain from~\eqref{1point} that
\beq
lmM^{[l]}_{\frac{(l-1)m}{2},1}(lm)=\frac{(lm)!}{l^m m! (lm+1)}
\sum_{s=0}^m(-1)^s\binom{m}{s}\Big/\binom{lm}{ls}.
\eeq

For $k\ge2$, using~\eqref{npoint} one can easily compute out 
\begin{align*}
&M^{[3]}_{2}(7,8;n)=7 n^2 (9 n^8+600 n^6+11077 n^4+55050 n^2+47664),\\
&M^{[4]}_{2}(2,6;n)=\frac{5}{2} n^2 (n^4+16 n^2+25),\\
&M^{[5]}_3(3,3,4;n)=8 n (2 n^6+43 n^4+161 n^2+46),\\
&M^{[6]}_4(1,3,3,5;n)=40 n^2 (5 n^6+205 n^4+1612 n^2+1866).
\end{align*}
Similar to \cite{BDY1, DY1, DYZ2}, using~\eqref{npoint} it is straightforward to obtain formulas like 
\begin{align}
&\sum_{b\ge1} bM^{[l]}_{2}(2,b;n) \frac1{\lambda^{b+1}}= 2n\lambda g(\lambda,n,l,0) - n g(\lambda,n,l,l-1) - n g(\lambda,n,1,0) -2\lambda, 
\label{special2b}
\end{align}
where $g$ is given by~\eqref{gm1}.
An algorithm that leads to this kind of formulas systematically is given in \cite{DY1} (see also~\cite{HY}), 
that is also very efficient for computing $M^{[l]}_{g,k}(b_1,\dots,b_k)$ with equaling $b$'s (see Section~\ref{section4.2}).
We note that formula~\eqref{special2b} agrees with~\cite[Theorem 3.11]{BC}. 
By a Laurent expansion we obtain from~\eqref{npoint} that when 
$a+b+2$ is a multiple of~$l$, 
\begin{align}
&(a+1)(b+1)M^{[l]}_2(a+1,b+1;n)
=
n^2\sum_{0\le j\le a\atop l| j-a}(j+1)
f(n,a-j,\frac{a-j}{l};l)
f(n,j+b+2,\frac{j+b+2}{l};l)
\nn\\
&-n\sum_{i_1=1}^{l-1}\sum_{0\le j\le a\atop l| a-j+i_1}(j+1)
f(n,a-j,\frac{a-j+i_1}{l}-1;l)
f(n-i_1,j+b+2,\frac{j+b+2-i_1}{l};l)
\nn\\
&-n\sum_{i_2=1}^{l-1}\sum_{0\le j\le a\atop l| a-j-i_2}(j+1)
f(n-i_2,a-j,\frac{a-j-i_2}{l};l)
f(n,j+b+2,\frac{j+b+2+i_2}{l}-1;l)
\nn\\
&+\sum_{i_1,i_2=1}^{l-1}\sum_{0\le j\le a\atop l | i_1-i_2+a-j}(j+1)
f(n-i_2,a-j,\frac{a-j+i_1-i_2}{l}-1;l)
f(n-i_1,j+b+2,\frac{j+b+2-i_1+i_2}{l}-1;l), \label{twopontnumber}
\end{align}
and $M^{[l]}_{2}(a+1,b+1;n)$ vanishes if $l {\not|}\, a+b+2$. 
Here, $a,b\ge0$. 
Again with the help of~\eqref{equivlence} it is easy to see that for the case when $l=2$ formula \eqref{twopontnumber} becomes \cite[formula (3.2.5)]{DY1}.

\subsection{Comparison with Grothendieck's dessin counting}

Denote by $\mathcal{P}_d$ the set of partitions of weight $d$. 
Let $g, d$ be nonnegative integers and let $\mu, \nu \in \mathcal{P}_d$. 
Recall from~\cite{ALS16} (see also~\cite{GGM1,GGM2,GGM3}) that the {\it strictly monotone double Hurwitz
number $h_g(\mu, \nu)$ in genus $g$ and degree $d$} is defined as the number of tuples $(\alpha, \tau_1,\dots, \tau_r, \beta)$ satisfying
\begin{itemize}
\item[(i)] $\alpha, \beta$ are permutations of $\{1,\dots,d\}$ with cycle type $\mu, \nu$ respectively, and $\tau_1,\dots , \tau_r$
are transpositions such that $\alpha\tau_1\dots\tau_r = \beta$,
\item[(ii)] the subgroup generated by $\alpha, \tau_1,\dots, \tau_r$ acts transitively on $\{1,\dots,d\}$,
\item[(iii)] writing $\tau_j = (a_j, b_j )$ with $a_j < b_j$, $j = 1,\dots, r$, then we require $b_1 <\dots< b_r$.
\end{itemize}
Here $r = \ell(\mu) + \ell(\nu) + 2g-2$.
According to~\cite{ALS16}, 
when all components of~$\mu$ equal~$l$, i.e., $\mu=(l,\dots,l)$, 
counting the corresponding strictly monotone Hurwitz number is equivalent to counting $l$-hypermaps with $(b_1,\dots,b_k)=\nu$. 
Explicitly, by~\cite[Proposition~4.8]{ALS16} we have
\beq\label{Mgk-hur}
M^{[l]}_{g,k}(\nu_1,\dots,\nu_k)=\frac{\prod_{i=1}^{\infty}n_i(\nu)!}{|\nu|!}h_g(\mu,\nu).
\eeq

Let us now recall Grothendieck's dessin counting in the sense of e.g.~\cite{KZ,LZ,YZ,Zhou2,Zograf}. 
Let $(C,f)$ be a Belyi pair of genus $g$ and degree $d$, and $\Gamma$ the corresponding dessin. Put $v=|f^{-1}(0)|$, $k=|f^{-1}(1)|$ and $m=|f^{-1}(\infty)|$. Assume the poles of $f$ are labelled and denote the set of their orders by $\mu=(\mu_1,\dots,\mu_m)$ so that $\sum_{i=1}^m\mu_i=d$. The triple $(v,k,\mu)$ is called the type of the dessin $\Gamma$, and the set of all dessins of type $(v,k,\mu)$ will be denoted by $\mathcal{D}_{v,k,\mu}$. We call 
\beq
N_{v,k}(\mu_1,\dots,\mu_m):=\sum_{\Gamma\in\mathcal D_{v,k,\mu}}\frac1{|{\rm Aut}_b\Gamma|}
\eeq
{\it Grothendieck's dessin counting}, i.e., 
the weighted count of labelled dessins. Here, 
 ${\rm Aut}_b\Gamma$ denotes the group of automorphisms of $\Gamma$ that preserve the boundary component wise.

In~\cite{CDO} (see also~\cite{GGR}), the strictly monotone Hurwitz numbers are shown to be related to 
connected LUE correlators. In~\cite{YZ} (cf.~\cite{AC}), the weighted count of dessins d'enfants is related to the connected LUE correlators. 
The following proposition is then obtained in~\cite{YZ}.

\noindent {\bf Proposition A (\cite{YZ}).} 
{\it For any $g\ge0$ and a partition $\mu= (\mu_1,\dots,\mu_m)$ of length $m$, and for $v,k,m\ge1$ satisfying
$|\mu|-m-l-k=2g-2$, denoting $n_i(\mu)$ the multiplicity of $i$ in $\mu$, we have
\beq\label{Nvk-hur}
N_{v,k}(\mu) =\frac{\prod_{i=1}^\infty n_i(\mu)!}{|\mu|!}\sum_{\nu\in\mathcal{P}_{|\mu|},\, \ell(\nu)=k}h_g(\mu,\nu).
\eeq}

For the case when $\mu=(l,\dots,l)$, by using~\eqref{Mgk-hur}
we see that \eqref{Nvk-hur} becomes
\beq\label{Nvk-l}
N_{v,k}(\mu)=\frac{\ell(\mu)!}{k!}\sum_{\nu\in\mathcal{P}_{d}, \, \ell(\nu)=k}
\binom{k}{n_1(\nu),\cdots,n_{|\mu|}(\nu)}M^{[l]}_{g,k}(\nu_1,\dots,\nu_k),
\eeq
where we call that $d=|\mu|$ and $g=1+(d-v-k-\ell(\mu))/2$.
Note that an explicit formula for generating series of $N_{v,k}(\mu)$ is given in~\cite[Proposition~1]{YZ}, from which 
we can compute the left-hand side of~\eqref{Nvk-l}, and 
using~\eqref{1point} and~\eqref{npoint} we can compute the right-hand side.
In this way we have checked the validity~\eqref{Nvk-l} for all $\mu$ with $|\mu|=d\le9$ and $\ell(\mu)\le4$, which in turn gives 
a verification of \eqref{1point} and~\eqref{npoint} for small cases.

Of course, formula~\eqref{Nvk-l} can be proved purely combinatorially,
which gives a partial answer to a question proposed in~\cite{YZ}. 
Indeed, from~\cite[Definition 3]{DM} we know that 
$M^{[l]}_{g,k}(\nu_1,\dots,\nu_k)$ is also the 
weighted count of connected 
genus $g$ and degree~$d$ branch covers of marked Riemann surfaces $f: (S; p_1, \dots, p_k) \to \mathbb{P}^1$ such that
$f$ is unramified over $\mathbb{P}^1\setminus \{0, 1, \infty\}$, 
each point in $f^{-1} (1)$ has ramification order~$l$, and the divisor $f^{-1}(\infty)$ equals $\nu_1p_1+\cdots+\nu_kp_k$.

 However, so far we still do not know other proofs for the more general Proposition~A.

\subsection{More computations}\label{section4.2}
In this subsection, we compute numbers like $M^{[l]}_{g,k}(b,\dots,b)$ by using 
an algorithm from~\cite{DY1} (cf.~\cite{HY}).
When $l=2$, concrete computations had been given in~\cite{DY1}. 
Let us consider $l\ge3$. 

For any $l\ge3$, when $b=1$ the number $M^{[l]}_{g,k}(1,\dots,1)$ vanishes unless $(g,k)=(0,l)$. For $(g,k)=(0,l)$, it is equal to counting circular permutation, giving
\beq
M^{[l]}_{g,k}(1,\dots,1)=(l-1)!\delta_{k,l}\delta_{g,0}.
\eeq

For $l=3$ and $b=3$, we list in Table~\ref{21tau22} the first few $M^{[3]}_{g,k}(3,\dots,3)$.
For $l=4$ and $b=4$, we list in Table~\ref{table44} the first few $M^{[4]}_{g,k}(4,\dots,4)$.
For $l=5$ and $b=5$, we list in Table~\ref{table55} the first few $M^{[5]}_{g,k}(5,\dots,5)$.
For $l=6$ and $b=5$, we list in Table~\ref{table65} the first few $M^{[6]}_{g,k}(5,\dots,5)$.

Finally, let us consider a duality between the numbers under consideration. 
Let $(\sigma_0,\sigma_1,\sigma_2)$ be an $l$-hypermap 
whose graphical representation
contains $k$ white $b$-gons and $k':=\frac{bk}{l}$ blue $l$-gons.
By exchanging white and blue we get a $b$-hypermap $(\sigma_0^{-1},\sigma_2^{-1},\sigma_1^{-1})$, 
whose graphical representation contains $k'$ white $l$-gons and $k$ blue $b$-gons.
For the purely combinatorial reason, we obtain
\beq\label{blue-whiteduality}
k'!M^{[l]}_{g,k}(b,\dots,b)=k!M^{[b]}_{g,k'}(l,\dots,l).
\eeq
We call this identity the {\it blue/white duality}. 
Geometric interpretation of this identity deserves a further study. 
Particularly, 
\beq\label{blue-white}
k'!M^{[l]}_{g,k}(2,\dots,2)=k!M^{[2]}_{g,k'}(l,\dots,l).
\eeq
Computations for the right-hand side of~\eqref{blue-white} (i.e., for $l$-gon angulations or say for $l$-valent ribbon graphs) 
were also given in~e.g.~\cite{DY1}.
We checked by means of explicit examples that our computations 
for the left-hand side of~\eqref{blue-white}
using the above-mentioned algorithm based on~\eqref{npoint} agree with those in~\cite{DY1}.

\begin{table}[ht]
\renewcommand{\arraystretch}{1.2}
$$
\begin{array}{|c|c|c|c|c|c|c|}
\hline
k&g=0&g=1&g=2&g=3&g=4\\
\Xhline{1pt}
1 &\frac13&\frac13&0&0&0\\
\hline
 2 & 1 & 3 & 0 & 0 & 0 \\
 \hline
 3 & 8 & \frac{152}{3} & 16 & 0 & 0 \\
 \hline
 4 & 112 & 1256 & 1416 & 0 & 0 \\
 \hline
 5 & 2304 & 41088 & 105984 & 22400 & 0 \\
 \hline
 6 & 63360 & 1670400 & 8012160 & 6044160 & 0 \\
 \hline
 7 & 2196480 & 81192960 & 640604160 & 1109038080 & 188160000 \\
 \hline
 8 & 92252160 & 4592931840 & 54935193600 & 177557452800 & 106424693760 \\
 \hline
\end{array}
$$
\caption{\label{21tau22}$M^{[3]}_{g,k}(3,\dots,3)$.}
\end{table}

\begin{table}[ht]
\renewcommand{\arraystretch}{1.2}
$$
\begin{array}{|c|c|c|c|c|c|c|}
\hline
k&g=0&g=1&g=2&g=3&g=4\\
\Xhline{1pt}
1&\frac14&\frac54&0&0&0\\
\hline
 2 & \frac{3}{2} & \frac{111}{4} & \frac{189}{4} & 0 & 0  \\
 \hline
 3 & 27 & 1170 & \frac{17307}{2} & \frac{18585}{2} & 0  \\
 \hline
 4 & 891 & \frac{145557}{2} & 1356183 & 6447033 & \frac{10584027}{2}  \\
 \hline
 5 & 44226 & 5989950 & 219806622 & 2674980450 & 9691264278  \\
 \hline
 6 & 2974320 & 613613880 & 38469942360 & 933634192860 & 8679911822820  \\
 \hline
 7 & 254304360 & 75230000640 & 7344595320300 & 307628585825220 & 5724003671541540 \\
 \hline
\end{array}
$$
\caption{\label{table44}$M^{[4]}_{g,k}(4,\dots,4)$.}
\end{table}

\renewcommand{\arraystretch}{1.2}
\begin{table}[ht]
$$
\begin{array}{|c|c|c|c|c|c|c|c|c|c|}
\hline
k&g=0&g=1&g=2&g=3&g=4\\
\Xhline{1pt}
1&\frac15&3&\frac85&0&0\\
\hline
 2 & 2 & 124 & 1210 & 1544 & 0  \\
 \hline
 3 & 64 & 9760 & 353664 & 3586592 & 8215040  \\
 \hline
 4 & 3840 & 1134720 & 97608576 & 3116265600 & 36432809856  \\
 \hline
 5 & 350208 & 174661632 & 28666036224 & 2030559252480 & 65131485161472  \\
 \hline
\end{array}
$$
\caption{\label{table55}$M^{[5]}_{g,k}(5,\dots,5)$.}
\end{table}

\begin{table}[ht]
\renewcommand{\arraystretch}{1.9}
$$
\begin{array}{|c|c|c|c|c|c|c|c|c|c|}
\hline
k&g=0&g=1&g=2&g=3\\
\Xhline{1pt}
 6 & 37950000&28538250000&9105774300000 &
 \begin{gathered}14983340\\61300000\end{gathered}\\
 \hline
 12& \begin{gathered}54227552841\\00000000000 \end{gathered}&
  \begin{gathered}2179514531294\\1000000000000 \end{gathered}&
 \begin{gathered} 43919127207074\\819700000000000 \end{gathered}&
  \begin{gathered}5644796628827485\\4554800000000000 \end{gathered}\\
  \hline
\end{array}
$$
\caption{\label{table65}$M^{[6]}_{g,k}(5,\dots,5)$ with $6|k$.}
\end{table}

\end{document}